\documentclass[sigconf,nonacm]{acmart}

\usepackage{enumitem}
\usepackage{mathtools}
\usepackage[ruled, lined, linesnumbered]{algorithm2e}
\SetKwInput{KwInput}{Input}
\SetKwInput{KwReturn}{Return}
\allowdisplaybreaks

\usepackage{tikz}
\newcommand*\circled[1]{\tikz[baseline=(char.base)]{\node[shape=circle,draw,inner sep=0.05pt] (char) {#1};}}

\begin{document}

\title[Analytical Power-Aware Provisioning for Prefill--Decode Disaggregated AI Inference]
{Analytical Power-Aware Provisioning\texorpdfstring{\\}{ }for Prefill--Decode Disaggregated AI Inference}

\author{Mingyuan Yan}
\email{my3188@nyu.edu}
\affiliation{%
  \department{Department of Electrical and Computer Engineering}
  \institution{New York University}
  \city{Brooklyn}
  \state{New York}
  \country{USA}
}

\author{Haiyu Wang}
\email{hw3689@nyu.edu}
\affiliation{%
  \department{Department of Electrical and Computer Engineering}
  \institution{New York University}
  \city{Brooklyn}
  \state{New York}
  \country{USA}
}

\author{Linxuan Biao}
\email{lb4770@nyu.edu}
\affiliation{%
  \department{Department of Electrical and Computer Engineering}
  \institution{New York University}
  \city{Brooklyn}
  \state{New York}
  \country{USA}
}

\author{H. Jonathan Chao}
\email{chao@nyu.edu}
\affiliation{%
  \department{Department of Electrical and Computer Engineering}
  \institution{New York University}
  \city{Brooklyn}
  \state{New York}
  \country{USA}
}

\author{Sai Qian Zhang}
\email{sai.zhang@nyu.edu}
\affiliation{%
  \department{Department of Electrical and Computer Engineering}
  \institution{New York University}
  \city{Brooklyn}
  \state{New York}
  \country{USA}
}

\author{Wenqi Cui}
\authornote{Corresponding author.}
\email{wenqicui@nyu.edu}
\affiliation{%
  \department{Department of Electrical and Computer Engineering}
  \institution{New York University}
  \city{Brooklyn}
  \state{New York}
  \country{USA}
}

\begin{abstract}

Power availability increasingly constrains the operation of AI inference fleets, creating a need for provisioning methods that jointly consider serving capacity and power consumption.
Prefill--decode (PD) disaggregation has emerged as a prevalent architecture for large-scale inference serving.
However, determining the appropriate numbers of prefill and decode instances is challenging because serving capacity depends jointly on workload characteristics, hardware constraints, queueing, and KV-cache reservations.
Existing approaches largely rely on profiling and simulation, providing limited analytical insight into how provisioning decisions shape the tradeoff between serving capacity and power consumption.

This paper develops an analytical framework for power-aware provisioning of PD-disaggregated AI inference.
Given an inference workload and hardware,
the framework models the serving capacity and average power consumption of a provisioned deployment. The serving-capacity model is derived from the joint distribution of input--output lengths and hardware compute and memory limits. In particular, it explicitly captures the coupling between prefill and decode induced by KV-cache reservations, as well as the impact of request queueing.
On this basis, the power model determines per-instance power consumption as a function of normalized serving throughput.
Together, the models determine the serving capacity--power Pareto front among candidate provisioned deployments,
enabling the service provider to choose a provisioned deployment as the workload or available power changes.

\end{abstract}

\maketitle

\section{Introduction}
\label{sec:intro}
Electricity demand from datacenters is growing rapidly
and is expected to add hundreds of terawatt-hours to annual electricity demand,
with AI accounting for a substantial share of this growth~\cite{iea2026keyquestions}.
Expanding electricity generation and grid capacity, however,
typically takes much longer than constructing a datacenter~\cite{iea2025europedc,chen2025griddemand}.
As a result, datacenter projects increasingly face delayed
or denied grid connections~\cite{nyeo62,txdirective2026},
as well as conditional interconnection agreements that limit
when and how much power a datacenter may consume from the grid~\cite{cru2025leu}.
These constraints make two properties of computing loads increasingly valuable:
efficiency in delivering a required service while consuming as little power as possible,
and flexibility in reducing power demand while maintaining as much serving capacity as possible.

Within AI workloads,
inference is becoming a major and persistent source of electricity demand~\cite{patterson2022carbon}.
Both request volume and request size are increasing:
large language model (LLM) based services are becoming routine~\cite{chatterji2025chatgpt,deepseekinfra2025},
agentic workflows turn one user task into multiple model calls~\cite{yao2023react,copilottraces2026},
and reasoning models use more tokens before producing an answer~\cite{deepseekr1}.
These trends make power-aware inference provisioning increasingly important,
both to reduce the power required for a given serving capacity
and to understand how much serving capacity can be maintained as power is reduced.

An AI inference request is typically processed in two phases: \emph{prefill} and \emph{decode}.
During \emph{prefill}, the model processes the entire input sequence,
constructs the key-value (KV) cache reused during decode, and generates the first output token.
During \emph{decode},
the model reuses that cache and generates the remaining output one token at a time~\cite{orca2022}.
Prefill is highly parallel and generally compute-bound,
whereas decode repeatedly reads model weights and KV-cache data and is generally limited by memory bandwidth.
This difference motivates prefill--decode (PD) disaggregation~\cite{splitwise2024,distserve2024}, which places the two phases in separate pools of model instances that can be provisioned independently.
This architecture has consequently been widely used in a range of large-scale inference systems~\cite{mooncake2025,deepseekinfra2025,rtpllm2026,cloudmatrix2025,heteroscale2025,metavllm2025}.

Choosing how many prefill instances
and how many decode instances to run is a central provisioning decision for PD-disaggregated AI inference.
For a given workload and hardware platform,
this choice determines the system's power consumption and its serving capacity (i.e., the maximum request rate it can sustain).
Overprovisioning provides serving capacity beyond the required level but incurs additional resource and power costs,
whereas underprovisioning fails to meet the required serving capacity.
The two pools must also be balanced:
insufficient serving capacity in either phase creates a bottleneck
and can leave resources in the other phase underutilized.
These tradeoffs motivate a joint characterization of serving capacity
and power across provisioned deployments.

Existing PD provisioning methods commonly evaluate candidate deployments using simulation~\cite{splitwise2024,distserve2024,bestserve2025,frontier2026} or empirical performance measurements~\cite{pdallocation2026}.
However, hardware profiling is time-consuming, while simulation-based comparisons must be repeated when the request-length distribution changes.
Analytical models reduce this dependence on empirical evaluation~\cite{distserve2024,bestserve2025,niequeueing2026,fluidwait2025,afdprovisioning2026}, but existing approaches
do not capture the coupling between prefill and decode in a PD-disaggregated deployment.
Consequently, they cannot characterize deployment-level serving capacity, and load-dependent power consumption is generally not modeled.
This motivates the central question of this paper: \emph{Can we develop analytical models for PD-disaggregated AI inference that characterize how provisioning decisions shape the tradeoff between serving capacity and power?}

We therefore develop analytical models of serving capacity and average deployment power that apply across different workload distributions and hardware platforms.
The key challenge in characterizing serving capacity is that prefill activity changes the cache memory available for decode.
Each request reserves KV-cache space on its assigned decode instance before prefill begins.
As a result, the decode-side KV-cache pool is shared by active decode requests
and requests waiting for or undergoing prefill.
This coupling limits the number of active decode requests,
thereby affecting decode capacity and the characterization of serving capacity.

To this end, we develop a serving-capacity model from the joint distribution of input
and output lengths and the hardware compute and memory limits.
The model accounts for the KV-cache space reserved while requests wait for or undergo prefill,
which reduces the memory available to active decode requests.
We model prefill and decode capacities separately and combine their aggregate capacities
to obtain the end-to-end serving capacity.
We then develop a power model
that relates each instance's power consumption to its normalized serving throughput.
Together, the models determine the serving capacity--power Pareto front among candidate provisioned deployments.
The front shows the minimum-power provisioned deployment for a required serving capacity
and how much serving capacity can be maintained under a reduced power budget.
We evaluate the models across different workloads, including workloads derived from the Mooncake~\cite{mooncake2025} and Azure~\cite{splitwise2024} production traces.
The main contributions of the paper are summarized below:
\begin{itemize}[leftmargin=10pt]
    \item \textbf{Analytical serving-capacity model.}
    We develop an analytical model of the serving capacity of PD-disaggregated inference
    from the joint distribution of request input and output lengths, as well as hardware compute and memory limits.
    The model explicitly captures the coupling between prefill and decode induced by KV-cache reservations, as well as the impact of request queueing.
    \item \textbf{Load-dependent power model.}
    We develop separate power models for prefill and decode instances
    as functions of normalized serving throughput.
    The model captures how the provisioned numbers of instances and the serving load jointly determine the average power consumption of the deployment.
    \item \textbf{Serving capacity--power Pareto front.}
    Combining the serving-capacity and power models yields a discrete Pareto front over candidate provisioned deployments. This characterization quantifies the tradeoff between power consumption and serving capacity, providing a principled basis for grid-friendly inference control and provisioning decisions.
    \item \textbf{Provisioning for efficiency and flexibility.}
    We use the Pareto front to identify minimum-power provisioned deployments
    for a required serving capacity
    and to quantify how much serving capacity can be maintained as available power is reduced.
    We evaluate these provisioning decisions across different workloads
    and candidate deployments against measurements.
\end{itemize}

\subsection{Related Work}
\label{sec:related}

This work is closely related to topics on provisioning for
PD-dis\-ag\-gre\-gated inference,
analytical models, and power-aware control.

\noindent\paragraph{PD Disaggregation and Serving-Capacity Models.}
Under PD disaggregation~\cite{splitwise2024,distserve2024}, the serving capacity of AI inference depends on the numbers of prefill and decode instances and the request workload.
Existing provisioning methods commonly evaluate candidate deployments through simulation~\cite{splitwise2024,distserve2024,bestserve2025,frontier2026} or throughput measurements~\cite{pdallocation2026}.
However, hardware profiling~\cite{distserve2024,bestserve2025} can be time-consuming, while simulation-based comparisons must be repeated when the workload distribution changes.

To reduce reliance on purely empirical evaluation, several works introduce analytical models for inference performance. DistServe~\cite{distserve2024} and BestServe~\cite{bestserve2025} develop analytical models for
batch execution times, but still rely on simulation to determine the request rates that meet latency requirements.
A queueing-based method~\cite{pdallocation2026} derives the supported prefill request rate by applying an analytical queueing model to measured prefill performance, whereas it obtains decode throughput directly from measurements.
There also exist other analytical models~\cite{niequeueing2026,fluidwait2025,afdprovisioning2026} for architectures that are not PD-disaggregated, but they cannot be directly applied to capture the coupling between the prefill and decode pools. In addition, these models alone are not sufficient to characterize the tradeoff between serving capacity and power. To address these gaps, this work develops analytical models for both PD serving capacity and power, while accounting for prefill--decode coupling induced by KV-cache reservations and the impact of request queueing.

\paragraph{Power Modeling.}
Analytical power models for PD-disaggregated AI inference remain limited. Splitwise~\cite{splitwise2024} determines the numbers of prefill and decode instances under a power budget, but models deployment power as proportional to the number of provisioned instances, without accounting for the effect of workload on power consumption. Several studies model the relationship between power consumption and batch size~\cite{g2g2026,onewlaw2026,inferencefleetsim2026}.
Other approaches estimate average GPU power using kernel-level information~\cite{energaizer2026} or publicly available model and GPU specifications~\cite{wattgpu2026}.
However, none of these models captures how the numbers of prefill and decode instances, together with the workload served by each pool, jointly determine the average power consumption of a PD-disaggregated deployment. A separate line of work models energy per request or token~\cite{wilkins2024offline,sweetspot2026,tokenswatthours2026,tokenpowerbench2026,wattcounts2026}; however, such energy models do not directly characterize the aggregate power demand of an inference deployment. In contrast, this paper models prefill and decode power separately as functions of normalized serving throughput, thereby capturing how workload-dependent utilization of the two pools translates into deployment-level power consumption.

\paragraph{Power-Aware Control of Inference Workloads.}
Runtime power control typically starts from an already provisioned inference deployment and adjusts its operating configuration in response to workload or power conditions.
Some methods keep the number of instances fixed and control GPU frequency~\cite{voltanallm2025,greenllm2025},
batch size~\cite{pals2026}, or quantization~\cite{quantdemandresponse2026}.
For PD disaggregation, runtime systems adjust the numbers of prefill and decode instances separately,
to reduce energy consumption~\cite{dualscale2026}, meet power budgets~\cite{rapid2026,powerslider2026},
or respond to feedback from the running fleet~\cite{arrow2025,metavllm2025}. Other approaches regulate the aggregate power demand through coordination with external energy resources, such as battery energy storage systems, uninterruptible power supplies (UPSs), and supercapacitors~\cite{ko2025mitigation,xie2026data,koksal2026theory}. In contrast, this paper focuses on the provisioning configuration itself, using the numbers of prefill and decode instances as the control variables. This choice requires an analytical characterization of how each candidate provisioning configuration determines both serving capacity and power consumption, which enables systematic comparison among candidate deployments.

\section{Problem Formulation}
\label{sec:optdep}
\subsection{Provisioning and Workload}
\label{sub:deploy}

An inference request contains an input sequence of $\ell_\mathrm{in}$ tokens
and produces an output sequence of $\ell_\mathrm{out}$ tokens.
AI inference proceeds in two phases.
The prefill phase processes the complete input, constructs the KV cache, and produces the first output token.
The decode phase then produces each remaining token in a separate forward pass.
Under PD disaggregation, the two phases run on separate pools of model instances.
Here, an \emph{instance} is one running replica of the model on its assigned hardware.
The service provider provisions $n_P$ prefill instances and $n_D$ decode instances.
Together, these two numbers define a \emph{provisioned deployment} $(n_P,n_D)$.
We label deployments as $n_P\mathrm{p}n_D\mathrm{d}$; for example, 3p2d denotes a deployment with three prefill instances and two decode instances.

The numbers of prefill and decode instances required depend on the inference workload.
We denote the workload by $\mathcal{W}:=(\mathcal{L},\lambda)$.
It consists of the joint distribution $\mathcal{L}$ of the input
and output lengths $(\ell_\mathrm{in},\ell_\mathrm{out})$,
and the mean request arrival rate $\lambda$.
The input length determines the computational work in prefill,
while both the input and output lengths determine the work in decode.
We assume that request arrivals are Poisson and that the workload remains stationary within each provisioning period.
When the arrival rate or the length distribution changes,
the provisioned deployment is chosen again for the new workload.

The provisioned deployment needs to satisfy the service-quality requirement,
which is defined by service level objectives (SLOs)
and is commonly a bound on the latency its users experience.
The SLOs mainly bound time to first token (TTFT),
the time from the arrival of a request to its first output token, and time per output token (TPOT),
the time between consecutive output tokens.
A provisioned deployment meets its SLOs when both TTFT and TPOT stay below their bounds.

Together, the workload $\mathcal{W}$
and the provisioned deployment $(n_P,n_D)$ determine the inference system's serving capacity,
its power consumption, and the latency its users experience.
Provisioning chooses $(n_P,n_D)$ for a given workload so that the fleet meets its SLOs
while consuming as little power as possible.

\subsection{Power-Aware Provisioning}
\label{sub:problem}
The service provider seeks the provisioned deployment that provides the required serving capacity
while consuming as little power as possible.
For a provisioned deployment $(n_P,n_D)$ under a workload $\mathcal{W}$,
let $\mathcal{P}(n_P,n_D;\mathcal{W})$ denote the average power of that provisioned deployment over a long serving window.
Its \emph{serving capacity} $\mu(n_P,n_D;\mathcal{W})$ is the maximum request rate it can sustain~\cite{vidur2024}.
To provide the required headroom, the provisioned deployment must have a serving capacity of at least $\lambda_{\min}$.
The resulting provisioning problem is
\begin{equation}
\begin{aligned}
\min_{n_P,n_D\in\mathbb{Z}_{+}}\quad & \mathcal{P}(n_P,n_D;\mathcal{W}) \\
\text{s.t.}\quad & \mu(n_P,n_D;\mathcal{W}) \;\ge\; \lambda_{\min} .
\end{aligned}
\label{eq:slo}
\end{equation}
We model average power rather than instantaneous power,
since short-term power fluctuations of individual instances tend to average out across a large provisioned deployment.
Peak-power behavior is outside the scope of this paper.
A provisioned deployment that satisfies this constraint is \emph{feasible}.
The constraint compares the serving capacity of a provisioned deployment with the required serving capacity of the workload and its SLOs.
A change in the workload $\mathcal{W}$ requires solving~\eqref{eq:slo} again.

The required serving capacity $\lambda_{\min}$ is chosen based on the current request arrival rate
and the headroom it requires to handle fluctuations in arrivals and satisfy the SLOs.
We therefore define $ \lambda_{\min}=\frac{\lambda}{\bar\rho}, $ where $0<\bar\rho<1$ is the maximum serving utilization chosen by the service provider.
A lower $\bar\rho$ reserves more headroom for tighter SLOs and variations in arrivals.
Prior work uses values of $0.85$ and $0.90$~\cite{fleetopt2026,inferencefleetsim2026,niequeueing2026}.
The parameter $\bar\rho$ can also serve as a control parameter for power-responsive operation.
When grid power is constrained,
the service provider can temporarily reduce the reserved headroom by increasing $\bar\rho$,
thereby lowering the required serving capacity $\lambda_{\min}$ and enabling operation with fewer instances
and lower power consumption,
provided that the resulting service degradation remains within short-term tolerance.

\subsection{Serving Capacity--Power Pareto Front}
\label{sub:pareto}
\begin{figure}[tb]
  \centering
  \includegraphics[width=\columnwidth]{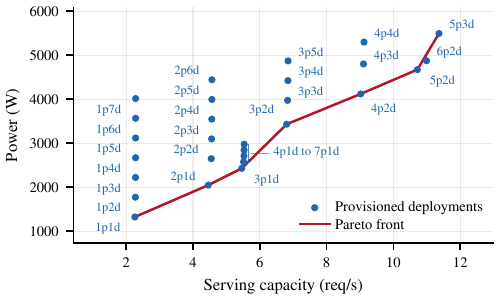}
  \vspace{-0.3cm}
  \caption{Modeled serving capacity and power of candidate deployments under a fixed-length workload, with the corresponding Pareto front.
  Points are labeled by deployment.}
  \label{fig:pareto}
  \vspace{-0.3cm}
\end{figure}

Although the provisioning problem~\eqref{eq:slo} is an integer optimization problem, its decision variables consist only of the numbers of prefill and decode instances, $(n_P,n_D)$. This low-dimensional structure makes it unnecessary to develop a dedicated integer optimization algorithm. Instead, the analytical models developed in Sections~\ref{sec:model} and~\ref{sec:derivation} can be used to obtain a discrete Pareto front between power and serving capacity, where the provisioning problem can then be reduced to a search over the resulting discrete Pareto front.

Specifically, each candidate $(n_P,n_D)$ corresponds to a point in the serving capacity--power plane, as illustrated in Figure~\ref{fig:pareto}. A deployment is Pareto-optimal if no other deployment provides at least the same serving capacity with lower power consumption. The collection of such deployments forms the discrete Pareto front, which characterizes the minimum power required to achieve different levels of serving capacity.
Among the Pareto-optimal deployments satisfying the serving-capacity requirement $\lambda_{\min}$, the one with the lowest power consumption is selected as the optimal provisioned deployment for solving~\eqref{eq:slo}. Thus, despite its integer formulation, the two-dimensional provisioning problem admits a simple geometric solution.

However, the Pareto front may change with the operating conditions, particularly with the request workload distribution. For a fixed workload distribution, an empirical Pareto front can be constructed from measurements. Such empirical profiling, however, is time-consuming, and the number of candidate deployments increases rapidly with the number of provisioned instances. Repeating this process for every possible workload distribution is therefore impractical. To address this challenge, we develop analytical models for the power model $\mathcal{P}(n_P,n_D;\mathcal{W})$ and the serving-capacity model $\mu(n_P,n_D;\mathcal{W})$. Given an online workload distribution $\mathcal{W}$, these models directly evaluate the power and serving capacity of each candidate deployment, allowing the corresponding Pareto front to be constructed without repeated empirical profiling. Figure~\ref{fig:pareto} shows the serving capacity--power plane the models determine for one workload. In the next sections, we present these two models and derive their analytical forms.

\section{Analytical Models}
\label{sec:model}
This section develops analytical models that relate a workload and a provisioned deployment to its serving capacity and average power.
We first model the serving capacities of the prefill and decode pools, then characterize the KV-cache constraint that determines the operating batch, and finally derive the deployment-level serving capacity and power.
Section~\ref{sec:derivation} provides the detailed derivations and calibration procedures.

\subsection{Serving Capacity Model}
\label{sub:capacity}
Each request runs first on a prefill instance and then on a decode instance,
with its KV-cache transferred between them.
Modern KV-transfer backends~\cite{mooncake2025,splitwise2024,tokenscale2025} perform this transfer asynchronously and overlap much of it with prefill computation.
We therefore assume that KV-cache transfer does not limit serving capacity and focus on the prefill and decode pools.
Let $\mu_P$ and $\mu_D$ denote the serving capacities of each prefill and decode instance, respectively. A deployment with $n_P$ prefill instances and $n_D$ decode instances therefore provides aggregate capacities of $n_P\mu_P$ and $n_D\mu_D$. Since every request passes through both pools, the end-to-end serving capacity is limited by the bottleneck pool:
\begin{equation}
\mu = \min\!\big(n_P\,\mu_P,\; n_D\,\mu_D\big).
\label{eq:twobottleneck}
\end{equation}

Prefill and decode are governed by different hardware bottlenecks.
Prefill processes all input tokens in a single forward pass and is typically \emph{compute-bound}, whereas decode generates one token per active request per iteration and repeatedly accesses model weights and KV-cache data, making it typically \emph{memory-bandwidth-bound}.
We therefore model prefill capacity from compute throughput and decode capacity from memory bandwidth.

\begin{table}[tb]
\centering
\caption{Serving-capacity model characterization.}
\vspace{-0.2cm}
\label{tab:notation}
\footnotesize\setlength{\tabcolsep}{10pt}
\begin{tabular}{lll}
\toprule
 & Prefill & Decode \\
\midrule
resource demand  & $F$ (ops)       & $Q$ (bytes) \\
peak performance & $\pi$ (ops/s)  & $\beta$ (bytes/s) \\
utilization      & MFU                    & MBU \\
processing time  & $t_P$                  & $t_D$ \\
\midrule
modeling unit       & request              & batch iteration \\
\# requests / unit  & $1$                  & $B$ \\
\# tokens / request & $\ell_\mathrm{in}$ & $1$ \\
\bottomrule
\vspace{-0.5cm}
\end{tabular}
\end{table}

We use \emph{processing time} as the intermediate variable for deriving serving capacity.
For prefill, the processing time is the service time $t_P$ of a single request, whereas for decode, it is the duration $t_D$ of one batch iteration, during which each active request generates one output token.
Despite this difference in interpretation, both phases follow the same roofline principle:
\[
\text{processing time}
=
\frac{\text{resource demand}}
{\text{peak performance}\times\text{utilization}}.
\]
Serving capacity then follows by converting this processing time from the phase-specific modeling unit in Table~\ref{tab:notation} to requests per second.

For compute-bound prefill, the resource demand is measured by the number of floating-point operations, denoted by $F$ (ops). The peak compute throughput, denoted by $\pi$ (ops/s), is a hardware parameter determined by the accelerator architecture and numerical precision. We represent the achieved fraction of peak compute throughput by an effective model FLOPs utilization (MFU), which aggregates the execution efficiency of linear-layer and attention computation. The resulting prefill service time is $t_P = F/(\pi\,\mathrm{MFU})$.

For bandwidth-bound decode, the resource demand is measured by the amount of
data transferred from memory, denoted by $Q$ (bytes).
The peak memory bandwidth, denoted by $\beta$ (bytes/s), is likewise a hardware
parameter.
We represent the achieved memory bandwidth by an effective memory bandwidth utilization (MBU), which aggregates the effects of model-weight and KV-cache accesses. The resulting decode-iteration time is $t_D = Q/(\beta\,\mathrm{MBU})$.

Table~\ref{tab:notation} summarizes characterizations for the two models.
Although prefill and decode share the same resource-based timing principle, they require different conversions from processing time to serving capacity, as derived next.

\subsubsection{Serving Capacity of a Prefill Instance}
\label{sub:mup}

Prefill is compute-bound~\cite{pope2023scaling,sarathiserve2024}, so we quantify its processing demand in FLOPs.
Because matrix multiplications account for most of prefill computation,
the demand $F_r$ for request $r$ with $\ell_{\mathrm{in},r}$ input tokens is
\begin{equation}
F_r = \underbrace{2N\,\ell_{\mathrm{in},r}}_{\text{linear layers}}
  + \underbrace{c_a Ld\,\ell_{\mathrm{in},r}^2}_{\text{attention}} ,
\label{eq:flop}
\end{equation}
where $N$ is the number of model parameters, $L$ is the number of transformer layers, and $d$ is the attention width.
The linear term approximates the dense linear-layer cost as one multiplication and one addition per parameter per input token.
The quadratic term comes from causal attention, and $c_a$ depends on how the attention kernel applies the causal mask.

Under the roofline model~\cite{williams2009roofline},
the prefill instance runs at an achieved compute rate of $\pi\,\mathrm{MFU}$.
Dividing this FLOP demand by the achieved compute rate $\pi\,\mathrm{MFU}$ gives the prefill service time $t_{P,r}$:
\begin{equation}
\begin{aligned}
t_{P,r} &= \frac{F_r}{\pi\,\mathrm{MFU}} \\[2pt]
    &= \underbrace{\frac{2N}{\pi\,\mathrm{MFU}}}_{=: a_P}\,\ell_{\mathrm{in},r}
     + \underbrace{\frac{c_a Ld}{\pi\,\mathrm{MFU}}}_{=:  b_P}\,\ell_{\mathrm{in},r}^2 .
\end{aligned}
\label{eq:tp}
\end{equation}

A prefill instance's serving capacity is the reciprocal of its mean service time over requests:
\begin{equation}
\begin{aligned}
\frac{1}{\mu_P}
&= \mathbb{E}_r[t_P] \\
&= a_P\,\mathbb{E}_r[\ell_\mathrm{in}] + b_P\,\mathbb{E}_r[\ell_\mathrm{in}^2] .
\end{aligned}
\label{eq:mup}
\end{equation}
where $\mathbb{E}_r[\cdot]$ denotes the expectation over requests drawn from the length distribution $\mathcal{L}$.
Prefill capacity therefore depends on the first two moments of the input-length distribution.

\subsubsection{Serving Capacity of a Decode Instance}
\label{sub:mud}
At typical serving batch sizes, decode is memory-bound~\cite{pope2023scaling,yuan2024roofline}. Modern LLM serving systems commonly use continuous batching, in which completed requests leave the decode batch and requests that complete prefill are admitted to it between decode iterations. Consequently, the number of active requests varies over time, so decode capacity depends jointly on batching dynamics, request queueing, and per-request processing.
We show the resulting decode-capacity model below and elaborate its derivation in Section~\ref{sub:der_mud}.

Because decode is memory-bandwidth-bound, we model its processing demand in terms of memory traffic.
We define the operating batch $B$ as the mean number of active requests across decode iterations.
In each iteration, the decode instance reads the model weights once, requiring $2N$ bytes in bf16,
and accesses the KV cache of every active request, with $\kappa$ bytes transferred per context token.
The mean memory traffic $\bar{Q}$ per decode iteration is therefore
\begin{equation}
\bar{Q} = \underbrace{2N}_{\text{model weights}}
  + \underbrace{\kappa B\bar{\ell}_{\mathrm{ctx}}}_{\text{KV cache}} ,
\label{eq:qmean}
\end{equation}
where $\bar{\ell}_{\mathrm{ctx}}$ denotes the mean context length of an active request.

The mean $\bar{\ell}_{\mathrm{ctx}}$ differs from the mean context length of arriving requests: requests with longer outputs remain active for more decode iterations and are therefore sampled more often.
Averaging over active request--iteration pairs gives

\begin{equation}
\bar{\ell}_{\mathrm{ctx}}
= \frac{\mathbb{E}_r\!\left[
    (\ell_\mathrm{out}-1)\left(\ell_\mathrm{in}+\frac{\ell_\mathrm{out}}{2}\right)
  \right]}
  {\mathbb{E}_r\big[\ell_\mathrm{out}-1\big]} .
\label{eq:theta}
\end{equation}
The mean decode-iteration time $\bar{t}_D$ is the mean memory traffic $\bar{Q}$ divided by the achieved memory bandwidth $\beta\,\mathrm{MBU}$:
\begin{equation}
\begin{aligned}
\bar{t}_D &= \frac{\bar{Q}}{\beta\,\mathrm{MBU}} \\[2pt]
          &= \underbrace{\frac{2N}{\beta\,\mathrm{MBU}}}_{=: a_D}
           + \underbrace{\frac{\kappa\bar{\ell}_{\mathrm{ctx}}}{\beta\,\mathrm{MBU}}}_{=: b_D}\,B .
\end{aligned}
\label{eq:td_mean}
\end{equation}

We next convert iteration time into request-level serving capacity.
Each decode iteration generates one token for each active request and therefore produces $B$ output tokens on average.
Since prefill generates the first output token, each request requires $\mathbb{E}_r[\ell_\mathrm{out}-1]$ decode tokens on average.
The mean decode time per request is therefore
\begin{equation}
\frac{1}{\mu_D} = \frac{\mathbb{E}_r[\ell_\mathrm{out}-1]\,\bar{t}_D}{B} .
\label{eq:mud}
\end{equation}

The remaining unknown is the operating batch $B$, which is constrained by the available KV-cache memory. The next subsection outlines the model used to characterize $B$.

\subsubsection{Operating Batch of a Decode Instance}
\label{sub:operating_batch}
One key challenge is that the operating batch $B$ is not a predetermined parameter.
Modern serving systems commonly employ continuous batching, where the instantaneous batch size is determined dynamically to utilize the KV-cache capacity of a decode instance while reserving sufficient space for requests undergoing prefill to be admitted to decode upon completion~\cite{mooncake2025}. For a request with input length $\ell_\mathrm{in}$, we denote this reservation by
$\ell_\mathrm{in}+R$ token slots, where $R$ is the additional space reserved for subsequently generated output tokens.

\begin{figure*}[tb]
  \centering
  \includegraphics[width=\textwidth]{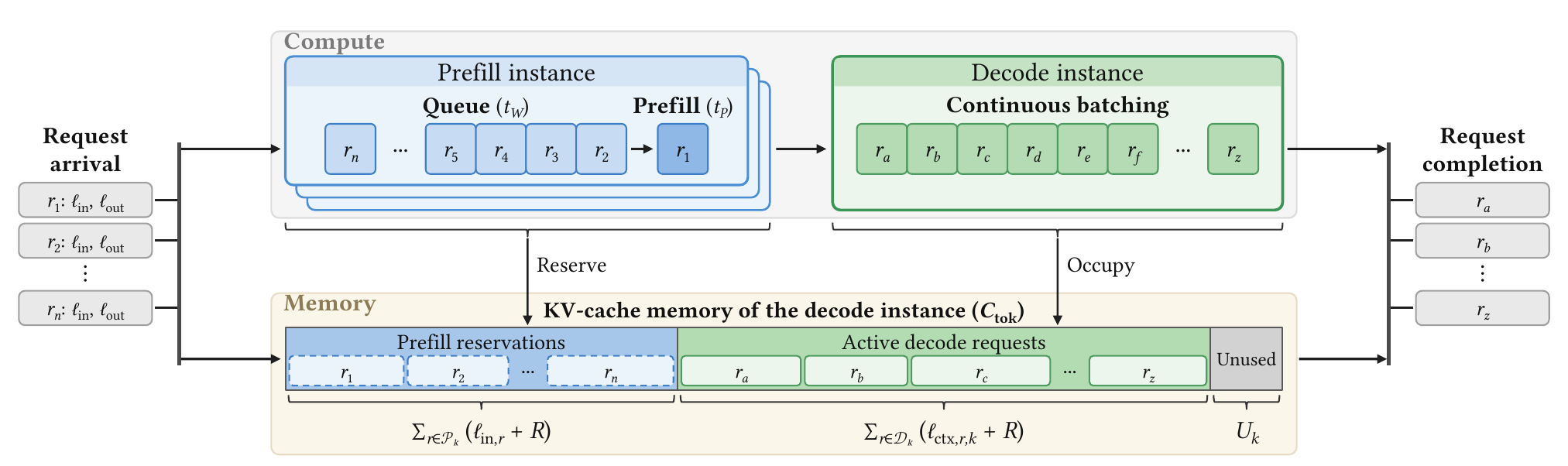}
  \caption{Request processing and decode-side KV-cache allocation under PD disaggregation.
  Requests waiting for or undergoing prefill reserve space in the same KV-cache pool used by active decode requests.}
  \label{fig:memory_balance}
\end{figure*}

Prefill reservations reduce the KV-cache space available to active decode requests (Figure~\ref{fig:memory_balance}).
The operating batch is therefore determined by both the KV-cache capacity and the reservations held during prefill.
Let $C_\mathrm{tok}$ denote the total KV-cache capacity of a decode instance.
Averaged over iterations, this capacity is divided among three components:
(i) reservations held by requests waiting for or undergoing prefill,
(ii) KV-cache space occupied by requests actively decoding, and
(iii) a small amount of unused capacity when the next reservation cannot fit.
Their mean occupancies satisfy

\begin{equation}
\begin{aligned}
C_\mathrm{tok} &\approx
\underbrace{\mu_D\left(
\bar t_W\,\mathbb{E}_r[\ell_\mathrm{in}+R]
+ \mathbb{E}_r\!\left[t_P(\ell_\mathrm{in}+R)\right]
\right)}_{\text{prefill occupancy}} \\[4pt]
&\quad + \underbrace{B\big(\bar{\ell}_{\mathrm{ctx}}+R\big)}_{\text{decode occupancy}}
\;+\;\underbrace{\frac{\mathbb{E}_r[(\ell_\mathrm{in}+R)^2]}{2\,\mathbb{E}_r[\ell_\mathrm{in}+R]}}_{\text{unused capacity}} .
\end{aligned}
\label{eq:operating_batch}
\end{equation}
where $\bar t_W$ is the mean prefill waiting time, estimated using Kingman's approximation as described in Section~\ref{sub:der_opbatch}.

Setting the prefill occupancy in~\eqref{eq:operating_batch} to zero gives the \emph{full-pool batch} $B^{\max}$, the operating batch when the KV-cache pool is fully available to active decode requests:
\begin{equation}
B^{\max}=\frac{C_\mathrm{tok}-\mathbb{E}_r[(\ell_\mathrm{in}+R)^2]/\big(2\,\mathbb{E}_r[\ell_\mathrm{in}+R]\big)}{\bar{\ell}_{\mathrm{ctx}}+R}.
\label{eq:bmax}
\end{equation}
With prefill reservations, the operating batch is smaller than $B^{\max}$.

Substituting the expressions for $\bar{t}_D$ from~\eqref{eq:td_mean} and $\mu_D$ from~\eqref{eq:mud} into the memory balance~\eqref{eq:operating_batch} together with the waiting-time approximation yields a cubic equation in $B$. We show that this equation has a unique solution within the feasible operating regime, with details given in Theorem~\ref{thm:unique_b}.

\subsection{Power of a Provisioned Deployment}
\label{sub:power}

We next model the average power consumed by a provisioned deployment over a long serving window.
The power consumption of an inference instance is closely related to the utilization of its underlying hardware resources.
We find that a simple and effective characterization is the serving throughput of an instance relative to its serving capacity.
Intuitively, an instance operating close to its serving capacity keeps the bottleneck hardware resource highly utilized, whereas an instance serving only a small fraction of its serving capacity leaves more of that resource idle.
We therefore model power from the ratio between the instance serving throughput and its capacity.
Let $\lambda_{P,i}$ denote the serving throughput of prefill instance $i$, and $\lambda_{D,i}$ that of decode instance $i$.
We define their normalized serving throughputs as
\begin{equation}
\tilde\lambda_{P,i} = \frac{\lambda_{P,i}}{\mu_P},
\qquad
\tilde\lambda_{D,i} = \frac{\lambda_{D,i}}{\mu_D^{\max}},
\label{eq:lamn}
\end{equation}
where $\mu_D^{\max}$ denotes the \emph{full-pool decode capacity}, defined as the serving capacity when the KV-cache pool is fully available to active decode requests, i.e., in the absence of prefill reservations. Equivalently, $\mu_D^{\max}=\mu_D(B^{\max})$, the decode capacity of~\eqref{eq:mud} evaluated at the full-pool batch of~\eqref{eq:bmax}. Requests in prefill reserve part of the KV-cache pool, so the operating batch of~\eqref{eq:operating_batch} is smaller than $B^{\max}$.
A decode instance may therefore consume less power than at saturation even when the provisioned deployment operates at its serving capacity.
A value of $\tilde\lambda=0$ corresponds to an idle instance, while $\tilde\lambda=1$ corresponds to its reference capacity.

Across both prefill and decode and all calibration workloads in Appendix~\ref{app:calibration}, per-instance power increases approximately linearly with normalized serving throughput at low to moderate load and then saturates at high load, as shown in Figure~\ref{fig:ramp}.
We therefore model the per-instance power $p(\tilde\lambda)$ using a capped linear function, with separate parameters for the prefill and decode instances:

\begin{figure}[tb]
  \centering
  \includegraphics[width=\columnwidth]{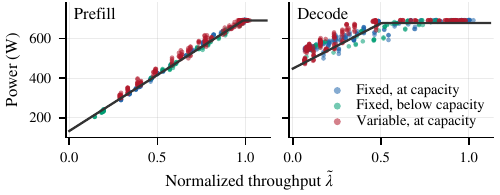}
  \caption{Per-GPU power versus normalized serving throughput for prefill and decode.
    Points show measurements, and curves show the fitted capped-ramp models.}
  \label{fig:ramp}
\end{figure}

\begin{equation}
p_s(\tilde\lambda_s)
=
\min\bigl(
p_{0,s}+\gamma_s\tilde\lambda_s,\;
p_{\mathrm{sat},s}
\bigr),
\qquad s\in\{P,D\},
\label{eq:ramp}
\end{equation}
where $s\in\{P,D\}$ indexes the prefill and decode instances,
$p_{0,s}$ is the static power floor,
$\gamma_s$ is the slope before saturation,
and $p_{\mathrm{sat},s}$ is the saturated power.

The deployment power $\mathcal{P}$ is the sum of the power consumed by its prefill and decode instances:
\begin{equation}
\mathcal{P} = \sum_{i=1}^{n_P} p_P(\tilde\lambda_{P,i}) \;+\; \sum_{i=1}^{n_D} p_D(\tilde\lambda_{D,i}) .
\label{eq:wtot}
\end{equation}
The sum in~\eqref{eq:wtot} holds for arbitrary per-instance serving throughputs.
We assume that the load is balanced, so that the request rate $\lambda$ is divided evenly among the instances of each pool, $\lambda_{P,i}=\lambda/n_P$ and $\lambda_{D,i}=\lambda/n_D$, and the deployment power reduces to $\mathcal{P}=n_P\,p_P\bigl(\tfrac{\lambda}{n_P\mu_P}\bigr)+n_D\,p_D\bigl(\tfrac{\lambda}{n_D\mu_D^{\max}}\bigr)$.
When comparing provisioned deployments, we evaluate this power at the serving capacity, $\lambda=\mu$, so that it depends only on $(n_P,n_D)$ and the workload $\mathcal{W}$.

\section{Serving Capacity Model for Decode Instances}
\label{sec:derivation}
This section derives the serving-capacity model for decode instances introduced in Section~\ref{sec:model}.
Throughout the analysis, $\mathbb{E}_r$, $\mathbb{E}_k$, and $\mathbb{E}_t$ denote averages over admitted requests, decode iterations, and continuous time, respectively.
The analysis applies to arrival processes and workloads for which these averages converge to the corresponding expectations as the observation window grows.

\subsection{Decode Capacity Model}
\label{sub:der_mud}

Unlike prefill, decode processes a changing set of active requests across iterations. This subsection provides a detailed derivation of the results presented in Section~\ref{sub:mud}.
Let $\mathcal{D}_k$ denote the set of requests active in iteration $k$, and let $B_k=|\mathcal{D}_k|$ be the instantaneous batch size.
We therefore define the operating batch as $B:=\mathbb{E}_k[B_k]$.

Each decode iteration reads the model weights once, requiring $2N$ bytes in bf16, and accesses the KV cache of every active request, transferring $\kappa$ bytes per context token.
For request $r\in\mathcal{D}_k$, let $\ell_{\mathrm{ctx},r,k}$ denote its context length in iteration $k$, consisting of its input tokens and the output tokens generated before that iteration.
The memory traffic in iteration $k$ is therefore
\begin{equation}
Q_k = \underbrace{2N}_{\text{model weights}}
    + \underbrace{\kappa\sum_{r\in\mathcal{D}_k}\ell_{\mathrm{ctx},r,k}}_{\text{KV cache}} .
\label{eq:qk}
\end{equation}
Under the roofline model, dividing $Q_k$ by the achieved memory bandwidth gives the decode-iteration time
$t_{D,k}=\frac{Q_k}{\beta\,\mathrm{MBU}}$.
Averaging $t_{D,k}$ over decode iterations gives the mean iteration time
\begin{equation}
\bar{t}_D := \mathbb{E}_k[t_{D,k}]
= \frac{2N}{\beta\,\mathrm{MBU}} + \frac{\kappa B\bar{\ell}_{\mathrm{ctx}}}{\beta\,\mathrm{MBU}} .
\label{eq:etd}
\end{equation}
By definition, $\bar{\ell}_{\mathrm{ctx}}$ is the average over active request--iteration pairs, so the mean aggregate context length is $B\bar{\ell}_{\mathrm{ctx}}$.
Importantly, $\bar{\ell}_{\mathrm{ctx}}$ is an active-request average rather than an average over arriving requests.
Requests with longer outputs remain active for more decode iterations and therefore contribute more often to the aggregate context length.
The following lemma expresses $\bar{\ell}_{\mathrm{ctx}}$ in terms of the workload distribution.

\begin{lemma}[Mean active context length]
\label{lem:ctx}
Suppose each request has an input--output length pair $(\ell_{\mathrm{in}},\ell_{\mathrm{out}})$ that follows the workload distribution $\mathcal{L}$, independently across requests.
Then the mean context length over active request--iteration pairs satisfies
\begin{equation}
\begin{aligned}
\bar{\ell}_{\mathrm{ctx}}
&:= \frac{\mathbb{E}_k\!\left[\sum_{r\in\mathcal{D}_k}\ell_{\mathrm{ctx},r,k}\right]}{\mathbb{E}_k[B_k]} \\
&= \underbrace{\mathbb{E}_r[\ell_\mathrm{in}]+\frac{\mathbb{E}_r[\ell_\mathrm{out}]}{2}}_{\text{arrival mean}}
+ \underbrace{\frac{\frac{1}{2}\mathrm{Var}(\ell_\mathrm{out})+\mathrm{Cov}(\ell_\mathrm{in},\ell_\mathrm{out})}{\mathbb{E}_r[\ell_\mathrm{out}-1]}}_{\text{residence correction}} .
\end{aligned}
\label{eq:ctx_longrun}
\end{equation}
\end{lemma}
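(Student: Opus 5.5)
Our plan is a renewal--reward (ergodic) argument over request--iteration pairs: we reorganize the iteration-indexed sums in the definition of $\bar{\ell}_{\mathrm{ctx}}$ as sums over requests, and then take long-run limits using the i.i.d. assumption on $(\ell_{\mathrm{in}},\ell_{\mathrm{out}})$.

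First, we fix a request $r$ and count its appearances in decode batches. Prefill produces the first output token, so $r$ is active in exactly $\ell_{\mathrm{out},r}-1$ decode iterations. In its $j$-th such iteration, $j=1,\dots,\ell_{\mathrm{out},r}-1$, its context length is $\ell_{\mathrm{in},r}+j$, i.e., the input plus the $j$ output tokens already generated. We will fix this indexing convention explicitly, since an off-by-one choice ($j$ versus $j-1$) shifts the result by $1/2$. Summing over $j$ gives the per-request total context contribution
\[
S_r=\sum_{j=1}^{\ell_{\mathrm{out},r}-1}(\ell_{\mathrm{in},r}+j)=(\ell_{\mathrm{out},r}-1)\Bigl(\ell_{\mathrm{in},r}+\frac{\ell_{\mathrm{out},r}}{2}\Bigr).
\]
This matches the numerator of~\eqref{eq:theta}, which confirms the convention.

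Second, we swap the order of summation. Over a window of $K$ iterations in which $M$ requests are admitted, we have $\sum_k\sum_{r\in\mathcal{D}_k}\ell_{\mathrm{ctx},r,k}=\sum_r S_r$ and $\sum_k B_k=\sum_r(\ell_{\mathrm{out},r}-1)$, up to boundary terms from requests that are only partially inside the window. We divide numerator and denominator by $M$. Since the $(\ell_{\mathrm{in}},\ell_{\mathrm{out}})$ pairs are i.i.d., the strong law of large numbers gives $\frac1M\sum_r S_r\to\mathbb{E}_r[S]$ and $\frac1M\sum_r(\ell_{\mathrm{out},r}-1)\to\mathbb{E}_r[\ell_{\mathrm{out}}-1]$. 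The ratio therefore converges to~\eqref{eq:theta}.

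Third, we expand algebraically. Write $\mathbb{E}_r[(\ell_{\mathrm{out}}-1)\ell_{\mathrm{in}}]=\mathbb{E}_r[\ell_{\mathrm{in}}]\,\mathbb{E}_r[\ell_{\mathrm{out}}-1]+\mathrm{Cov}(\ell_{\mathrm{in}},\ell_{\mathrm{out}})$, and
\[
\tfrac12\mathbb{E}_r[(\ell_{\mathrm{out}}-1)\ell_{\mathrm{out}}]=\tfrac12\bigl(\mathrm{Var}(\ell_{\mathrm{out}})+\mathbb{E}_r[\ell_{\mathrm{out}}]\,\mathbb{E}_r[\ell_{\mathrm{out}}-1]\bigr).
\]
Dividing by $\mathbb{E}_r[\ell_{\mathrm{out}}-1]$ yields the arrival mean $\mathbb{E}_r[\ell_{\mathrm{in}}]+\mathbb{E}_r[\ell_{\mathrm{out}}]/2$ plus the residence correction in~\eqref{eq:ctx_longrun}.

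The main obstacle is making the second step rigorous. The definition uses iteration averages $\mathbb{E}_k$, while the law of large numbers is over requests, so we must show that the boundary terms vanish. We plan to bound the contribution of requests straddling the window edges by the number of active requests at those two instants, times the maximal per-request sum. We will then argue that this is $o(M)$ under the standing assumption of Section~\ref{sec:derivation} that long-run averages converge. This requires $\mathbb{E}_r[\ell_{\mathrm{out}}^2]$ and $\mathbb{E}_r[\ell_{\mathrm{in}}\ell_{\mathrm{out}}]$ to be finite, which we will add as a mild moment condition, together with $\mathbb{E}_r[\ell_{\mathrm{out}}]>1$ so that the denominator is positive.
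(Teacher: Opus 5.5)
Your proposal is correct and follows the paper's proof step for step. You reindex the iteration sums by request, with each request contributing $\ell_{\mathrm{out}}-1$ terms whose contexts run from $\ell_{\mathrm{in}}+1$ to $\ell_{\mathrm{in}}+\ell_{\mathrm{out}}-1$; you let boundary requests vanish and pass to workload expectations; you sum the arithmetic series; and you expand via covariance. Your explicit moment conditions, the requirement $\mathbb{E}_r[\ell_{\mathrm{out}}]>1$, and the boundary-term bound only spell out what the paper asserts in one line. The bound is finite here because the KV-cache capacity $C_{\mathrm{tok}}$ caps both the number of active requests and each request's total context.
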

The variance term arises because requests with longer outputs appear in more active request--iteration pairs.
The covariance term raises $\bar{\ell}_{\mathrm{ctx}}$ when longer outputs tend to occur with longer inputs and lowers it when they tend to occur with shorter inputs~\cite{gallager1996discrete,afdprovisioning2026}.
The detailed proof is given in Appendix~\ref{app:proof_ctx}.

The mean iteration time above describes the cost of one decode iteration.
We next convert it into request-level serving capacity.
Each decode iteration generates one token for every active request, or $B$ decode tokens on average.
Since each request requires $\mathbb{E}_r[\ell_\mathrm{out}-1]$ decode tokens on average, the decode instance executes $\mathbb{E}_r[\ell_\mathrm{out}-1]/B$ iterations per admitted request on average.
Multiplying by the mean iteration time $\bar t_D$ gives
\[
\underbrace{\frac{1}{\mu_D}}_{\text{time per request}}
=
\underbrace{\frac{\mathbb{E}_r[\ell_\mathrm{out}-1]}{B}}_{\text{iterations per request}}
\cdot
\underbrace{\vphantom{\frac{1}{B}}\bar t_D}_{\text{time per iteration}},
\]
which is the decode-capacity relation in~\eqref{eq:mud}.
The detailed derivation in Appendix~\ref{app:proof_mud} accounts explicitly for the time-varying instantaneous batch size $B_k$.

Because this relation depends on decode execution through the mean iteration time $\bar t_D$, it also applies when the calibrated execution-time overheads in Appendix~\ref{sub:calibration} are included.
The remaining quantity is the operating batch $B$, which we determine from the KV-cache memory balance next.

\subsection{Operating Batch of a Decode Instance}
\label{sub:der_opbatch}
 This subsection provides a detailed derivation of the results presented in Section~\ref{sub:operating_batch}.
We now derive the KV-cache memory balance in~\eqref{eq:operating_batch}, which determines the operating batch $B$.
After accounting for model parameters and runtime workspace, a decode instance has $C_\mathrm{tok}$ token slots available for its KV-cache pool, each with space for the key and value states of one token.
In each decode iteration $k$, these slots are divided into three components.
Let $\mathcal{P}_k$ denote the admitted requests whose prefill has not completed; each request $r\in\mathcal{P}_k$ holds a reservation of $\ell_{\mathrm{in},r}+R$ token slots.
Each request $r\in\mathcal{D}_k$ active in decode occupies $\ell_{\mathrm{ctx},r,k}$ slots for its current context and reserves $R$ further slots.
The remaining slots, denoted $U_k$, are unused.
At the start of iteration $k$ the three components fill the pool exactly, $C_\mathrm{tok}=\sum_{r\in\mathcal{P}_k}(\ell_{\mathrm{in},r}+R)+\sum_{r\in\mathcal{D}_k}(\ell_{\mathrm{ctx},r,k}+R)+U_k$.
Averaging the number of slots in each component over decode iterations gives the mean KV-cache memory balance:
\begin{equation}
\begin{aligned}
C_\mathrm{tok}
&=
\underbrace{
  \mathbb{E}_k\!\left[
    \sum_{r\in\mathcal{P}_k}(\ell_{\mathrm{in},r}+R)
  \right]
}_{\text{prefill occupancy}}
\\[4pt]
&\quad+
\underbrace{
  \mathbb{E}_k\!\left[
    \sum_{r\in\mathcal{D}_k}(\ell_{\mathrm{ctx},r,k}+R)
  \right]
}_{\text{decode occupancy}}
+
\underbrace{
  \mathbb{E}_k[U_k]
}_{\text{unused capacity}}.
\end{aligned}
\label{eq:mean_memory_balance}
\end{equation}
We next derive the three terms in~\eqref{eq:mean_memory_balance}: the mean prefill occupancy, the mean decode occupancy, and the mean unused capacity.

\paragraph{Prefill occupancy.}
A request is admitted once KV-cache space has been reserved on its assigned decode instance.
Its prefill waiting time $t_W$ is the time from admission to the start of prefill.
Before prefill completes, each admitted request holds a KV-cache reservation on its assigned decode instance.
Its contribution to prefill occupancy therefore depends on both the reservation size and its prefill holding time, the time spent waiting for and undergoing prefill.
At the decode serving limit, each instance admits requests at its serving capacity.
The following lemma relates the continuous-time mean occupancy to these request-level quantities.

\begin{lemma}[Mean prefill occupancy]
\label{lem:prefill}
Let $\mathcal{P}(t)$ denote the admitted requests whose prefill has not completed at time $t$.
Each request holds $\ell_\mathrm{in}+R$ token slots while waiting for and undergoing prefill.
Then the continuous-time mean occupancy of these reservations is
\begin{equation}
\mathbb{E}_t\!\Big[\sum_{r\in\mathcal{P}(t)}(\ell_{\mathrm{in},r}+R)\Big]
= \mu_D\,\mathbb{E}_r\!\left[(t_W+t_P)(\ell_\mathrm{in}+R)\right] .
\label{eq:prefill_ct}
\end{equation}
If, in addition, prefill serves requests in first-come, first-served (FCFS) order, request lengths are independent across arrivals, and $t_P=a_P\ell_\mathrm{in}+b_P\ell_\mathrm{in}^2$, then
\begin{equation}
\begin{aligned}
\mathbb{E}_r\!\left[(t_W+t_P)(\ell_\mathrm{in}+R)\right]
={}& \bar{t}_W\big(\mathbb{E}_r[\ell_\mathrm{in}]+R\big) \\
&+ a_P\big(\mathbb{E}_r[\ell_\mathrm{in}^2]+R\,\mathbb{E}_r[\ell_\mathrm{in}]\big) \\
&+ b_P\big(\mathbb{E}_r[\ell_\mathrm{in}^3]+R\,\mathbb{E}_r[\ell_\mathrm{in}^2]\big),
\end{aligned}
\label{eq:prefill_terms}
\end{equation}
where $\bar{t}_W=\mathbb{E}_r[t_W]$ is the mean prefill waiting time.
\end{lemma}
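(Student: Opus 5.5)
The first identity~\eqref{eq:prefill_ct} is a weighted version of Little's law, $H=\lambda G$. Each admitted request $r$ carries a weight $w_r=\ell_{\mathrm{in},r}+R$ over its prefill holding interval $[s_r,\,s_r+t_{W,r}+t_{P,r})$, where $s_r$ is its admission time.

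Over a window $[0,T]$, I will integrate the instantaneous occupancy $\sum_{r\in\mathcal{P}(t)}w_r$ in time. Exchanging the sum and the integral gives $\int_0^T\sum_{r\in\mathcal{P}(t)}w_r\,dt=\sum_{r}w_r\,(t_{W,r}+t_{P,r})$, up to boundary terms from requests that straddle $0$ or $T$.

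Next I divide by $T$ and write the right side as $\frac{A(T)}{T}\cdot\frac{1}{A(T)}\sum_{r\le A(T)}w_r(t_{W,r}+t_{P,r})$, where $A(T)$ is the number of admissions. At the decode serving limit, $A(T)/T\to\mu_D$, and by the convergence assumption stated at the start of the section the request average tends to $\mathbb{E}_r[(t_W+t_P)(\ell_\mathrm{in}+R)]$. The boundary terms are bounded by the weighted holding times of at most the requests in the system at times $0$ and $T$. Under the stationarity and ergodicity assumptions these are $o(T)$, which is the standard step in Little's law. This yields~\eqref{eq:prefill_ct}.

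For~\eqref{eq:prefill_terms} I expand the product $(t_W+t_P)(\ell_\mathrm{in}+R)$ into two parts.

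\emph{Service part.} Substituting $t_P=a_P\ell_\mathrm{in}+b_P\ell_\mathrm{in}^2$ and taking expectations gives $a_P(\mathbb{E}_r[\ell_\mathrm{in}^2]+R\,\mathbb{E}_r[\ell_\mathrm{in}])+b_P(\mathbb{E}_r[\ell_\mathrm{in}^3]+R\,\mathbb{E}_r[\ell_\mathrm{in}^2])$. This is pure algebra.

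\emph{Waiting part.} I need $\mathbb{E}_r[t_W(\ell_\mathrm{in}+R)]=\bar t_W(\mathbb{E}_r[\ell_\mathrm{in}]+R)$, i.e.\ that $t_{W,r}$ is uncorrelated with $\ell_{\mathrm{in},r}$. The argument has three links.
\begin{enumerate}
\item Under FCFS, the waiting time of request $r$ is determined by the arrival epochs and service times of earlier requests, together with the residual work in the system at its arrival. This is the Lindley recursion $t_{W,r}=\max(0,\,t_{W,r-1}+t_{P,r-1}-\tau_r)$ in the single-server case, or its multi-server analogue across the $n_P$ FCFS instances.
\item Request lengths are independent across arrivals, and by the Poisson assumption the interarrival times are independent of lengths. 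Hence $t_{W,r}$ is a function of variables independent of $(\ell_{\mathrm{in},r},\ell_{\mathrm{out},r})$.
\item Therefore $\mathbb{E}[t_{W,r}\ell_{\mathrm{in},r}]=\mathbb{E}[t_{W,r}]\,\mathbb{E}[\ell_{\mathrm{in},r}]$. Passing to long-run request averages gives the claim.
\end{enumerate}

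The main obstacle is the subtlety in the first link: admission is gated by KV-cache reservations on the decode instance. A long-input request might therefore wait longer to be \emph{admitted}, so its own length could in principle influence when it enters the prefill queue. I will handle this by measuring $t_W$ from admission, as the lemma defines it. After admission, FCFS prefill ordering makes the wait depend only on previously admitted requests and the current prefill backlog. I will state explicitly that the admission epoch is treated as independent of the request's own length, consistent with the modeling assumptions. If a fully rigorous treatment is needed, the cleanest route is to condition on the history $\mathcal{F}_{r^-}$ of all earlier admissions and use $\mathbb{E}[\ell_{\mathrm{in},r}\mid\mathcal{F}_{r^-}]=\mathbb{E}_r[\ell_\mathrm{in}]$, which follows from the independence of request lengths across arrivals.
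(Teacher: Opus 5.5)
Your proposal is correct and follows essentially the same route as the paper. Both arguments apply a weighted Little's-law reindexing of the space--time occupancy, with vanishing boundary terms and $A(T)/T\to\mu_D$ at the decode serving limit. Both then factor $\mathbb{E}_r[t_W(\ell_\mathrm{in}+R)]=\bar t_W\,\mathbb{E}_r[\ell_\mathrm{in}+R]$ using FCFS and independence across requests, and substitute $t_P=a_P\ell_\mathrm{in}+b_P\ell_\mathrm{in}^2$ algebraically. Your Lindley-recursion argument makes the independence step more precise than the paper does. Your caveat that the admission epoch must not depend on the request's own length is also something the paper's proof assumes without stating it, since it treats admitted requests simply as ``arriving'' ones.
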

The occupancy relation~\eqref{eq:prefill_ct} is Little's law applied to KV-cache reservations, with reservation size as the weight~\cite{kleinrock1975,gallager1996discrete}.
The factorization in~\eqref{eq:prefill_terms} holds because, under FCFS and independent request lengths, a request's waiting time is determined by the work already in the queue and is therefore independent of its own input length.
The detailed proof is given in Appendix~\ref{app:proof_prefill}.

We estimate the mean prefill waiting time $\bar{t}_W$ using Kingman's approximation~\cite{kingman1961}:
\begin{equation}
\bar{t}_W \approx
\frac{CV_a^2+CV_s^2}{2}\cdot\frac{\rho_P}{\mu_P\,(1-\rho_P)},
\label{eq:kingman}
\end{equation}
with $\rho_P=n_D\mu_D/(n_P\mu_P)$ the serving utilization of a prefill instance, and $CV_a^2$ and $CV_s^2=\operatorname{Var}(t_P)/\mathbb{E}_r[t_P]^2$ the variation in its arrival intervals and prefill service times.
The approximation allows general distributions of service times and interarrival intervals under i.i.d. assumptions.
Here, prefill service time depends on input length, so we compute $CV_s^2$ from the workload's input-length distribution.
For Poisson arrivals assigned to prefill instances in round-robin order, $CV_a^2=1/n_P$.

Lemma~\ref{lem:prefill} gives a continuous-time average, whereas the KV-cache memory balance in~\eqref{eq:mean_memory_balance} averages occupancy over decode iterations.
We approximate the latter by the former:
\begin{equation}
\begin{aligned}
\mathbb{E}_k\!\left[
  \sum_{r\in\mathcal{P}_k}(\ell_{\mathrm{in},r}+R)
\right]
&\approx
\mathbb{E}_t\!\left[
  \sum_{r\in\mathcal{P}(t)}(\ell_{\mathrm{in},r}+R)
\right].
\end{aligned}
\label{eq:prefill_occupancy}
\end{equation}
This approximation is expected to be accurate when the decode-iteration duration is only weakly correlated with the prefill occupancy.
The evaluation in Section~\ref{sec:exp} uses the complete model with this approximation.
Substituting the waiting-time estimate in~\eqref{eq:kingman} into Lemma~\ref{lem:prefill} and applying the sampling approximation in~\eqref{eq:prefill_occupancy} gives the prefill-occupancy term in the memory balance.

\paragraph{Decode occupancy.}
Each active request occupies KV-cache space for its current context and reserves an additional $R$ token slots.
Averaging both parts over decode iterations gives the mean decode occupancy
\begin{equation}
\mathbb{E}_k\!\left[\sum_{r\in\mathcal{D}_k}(\ell_{\mathrm{ctx},r,k}+R)\right]
= B\left(\bar{\ell}_{\mathrm{ctx}}+R\right),
\label{eq:decode_occupancy}
\end{equation}
since, by the definitions of $B$ and $\bar{\ell}_{\mathrm{ctx}}$, the mean total context length is $B\bar{\ell}_{\mathrm{ctx}}$ and the $B_k$ reservations of $R$ slots average to $BR$.

\paragraph{Unused capacity.}
Requests reserve discrete amounts of KV-cache space, so the pool cannot always be filled exactly, and some space remains unused when the next reservation does not fit.
We approximate the unused space by assuming that the reservation that does not fit in the remaining space is sampled in proportion to its size and that the KV-cache limit falls uniformly within it.
The mean unused space is then
\begin{equation}
\mathbb{E}_k[U_k]\approx
\frac{\mathbb{E}_r[(\ell_{\mathrm{in}}+R)^2]}
     {2\,\mathbb{E}_r[\ell_{\mathrm{in}}+R]}.
\label{eq:unused}
\end{equation}
Appendix~\ref{app:unused} gives the derivation.

\paragraph{Operating batch.}
Combining the three occupancy terms gives the memory balance in~\eqref{eq:operating_batch}.
Let $g(B)$ denote the difference between the modeled KV-cache occupancy and the available pool capacity for a batch size $B$:
\[
\begin{aligned}
g(B)={}&
\mu_D(B)\left(
\bar t_W(B)\,\mathbb{E}_r[\ell_{\mathrm{in}}+R]
+\mathbb{E}_r[t_P(\ell_{\mathrm{in}}+R)]
\right)\\
&+B(\bar\ell_{\mathrm{ctx}}+R)
+\mathbb{E}_k[U_k]-C_{\mathrm{tok}}.
\end{aligned}
\]
The operating batch satisfies $g(B)=0$, where the modeled occupancy matches the available capacity.
The following theorem establishes its existence and uniqueness within the stable operating range.
\begin{theorem}[Existence and uniqueness of the operating batch]
\label{thm:unique_b}
If $C_{\mathrm{tok}}>\mathbb{E}_k[U_k]$,
then there is a unique operating batch $B>0$ satisfying
$g(B)=0$ and $\rho_P(B)<1$.
\end{theorem}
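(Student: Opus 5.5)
The plan is to treat $g$ as a function of the single variable $B$ on the stable domain $\mathcal{B}:=\{B>0:\rho_P(B)<1\}$. I will show that $g$ is continuous and strictly increasing there, negative near $B=0$, and unbounded above at the right end of $\mathcal{B}$. Existence then follows from the intermediate value theorem and uniqueness from strict monotonicity, so the cubic form of the equation never has to be used explicitly.

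\textbf{Step 1: monotonicity of the building blocks.} From~\eqref{eq:td_mean} and~\eqref{eq:mud},
\[
\mu_D(B)=\frac{B}{\mathbb{E}_r[\ell_\mathrm{out}-1]\,(a_D+b_D B)} .
\]
Here $a_D,b_D>0$, and $\bar\ell_{\mathrm{ctx}}$ does not depend on $B$ by Lemma~\ref{lem:ctx}. Hence $\mu_D$ is continuous and strictly increasing on $(0,\infty)$, with $\mu_D(0^+)=0$ and supremum $\bar\mu:=1/(\mathbb{E}_r[\ell_\mathrm{out}-1]\,b_D)$. It follows that $\rho_P(B)=n_D\mu_D(B)/(n_P\mu_P)$ is also continuous and strictly increasing. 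Therefore $\mathcal{B}=(0,B^\ast)$ is an interval, where $B^\ast$ is the unique solution of $\rho_P(B^\ast)=1$ when $n_D\bar\mu>n_P\mu_P$, and $B^\ast=\infty$ otherwise.

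Kingman's estimate~\eqref{eq:kingman} has the form $\bar t_W=K\rho_P/(1-\rho_P)$ with $K=(CV_a^2+CV_s^2)/(2\mu_P)\ge 0$. The map $\rho\mapsto\rho/(1-\rho)$ is increasing on $[0,1)$, so $\bar t_W(B)$ is nonnegative, continuous and nondecreasing on $\mathcal{B}$. Moreover $\bar t_W(B)\to\infty$ as $B\uparrow B^\ast$ when $B^\ast<\infty$, provided $K>0$, which holds since $CV_a^2=1/n_P>0$.

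\textbf{Step 2: $g$ is strictly increasing.} The prefill-occupancy term is $\mu_D(B)\bigl(\bar t_W(B)\,\mathbb{E}_r[\ell_\mathrm{in}+R]+\mathbb{E}_r[t_P(\ell_\mathrm{in}+R)]\bigr)$. It is a product of nonnegative nondecreasing factors, one of them strictly increasing, and the bracket is strictly positive because $\mathbb{E}_r[t_P(\ell_\mathrm{in}+R)]>0$. So this term is strictly increasing. The decode-occupancy term $B(\bar\ell_{\mathrm{ctx}}+R)$ is strictly increasing, and $\mathbb{E}_k[U_k]$ and $C_{\mathrm{tok}}$ do not depend on $B$. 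Hence $g$ is continuous and strictly increasing on $\mathcal{B}$, which gives uniqueness immediately.

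\textbf{Step 3: boundary behaviour and existence.} As $B\downarrow 0$, we have $\mu_D\to 0$ and $\bar t_W\to 0$, so $g(0^+)=\mathbb{E}_k[U_k]-C_{\mathrm{tok}}<0$ by the hypothesis. At the right end there are two cases.
\begin{itemize}
\item If $B^\ast<\infty$, then $\mu_D(B)\to\mu_D(B^\ast)>0$ while $\bar t_W(B)\to\infty$, so $g\to+\infty$.
\item If $B^\ast=\infty$, then $\bar t_W(B)$ stays bounded or grows, and in either case $B(\bar\ell_{\mathrm{ctx}}+R)\to\infty$, so again $g\to+\infty$.
\end{itemize}
The intermediate value theorem then yields a root in $(0,B^\ast)$, which is unique by Step 2.

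The main obstacle is Step 3 in the case $B^\ast<\infty$. The divergence of $g$ there relies on the blow-up of the Kingman factor, which needs $K>0$. I will also check whether the boundary case $n_D\bar\mu=n_P\mu_P$, where $\rho_P\to 1$ only as $B\to\infty$, needs separate treatment. It should not, because the decode-occupancy term already diverges in that case, so it falls under the second bullet above.
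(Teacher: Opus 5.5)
Your proof is correct and has the same skeleton as the paper's proof: $g$ is continuous and strictly increasing on the stable range, $g(0^+)=\mathbb{E}_k[U_k]-C_{\mathrm{tok}}<0$, and a positive value further right gives a unique root by the intermediate value theorem and monotonicity. The only difference is how positivity on the right is shown. The paper splits on $B^{\max}<B^\rho$ versus $B^\rho\le B^{\max}$ and, in the first case, uses $g(B^{\max})>0$ (it equals the prefill occupancy), which gives directly the finite bracket $\bigl(0,\min(B^{\max},B^\rho)\bigr)$ used by the root solver; you instead split on $B^\ast<\infty$ versus $B^\ast=\infty$ and rely on divergence of the decode-occupancy term in the latter case, which proves existence but yields the localization $B<B^{\max}$ only after the one-line remark that the prefill occupancy is strictly positive.
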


The theorem guarantees a unique operating batch within the stable prefill regime; the proof is given in Appendix~\ref{app:proof_unique_b}.
The condition $C_{\mathrm{tok}}>\mathbb{E}_k[U_k]$ is readily satisfied in the deployments considered here: the KV-cache pool accommodates multiple concurrent requests and is substantially larger than the mean unused space.
The solution lies below both the full-pool batch $B^{\max}$ and $B^\rho$, the upper boundary of the range $\rho_P(B)<1$.
We compute $B$ by solving $g(B)=0$ numerically within these bounds.
Substituting the resulting batch into~\eqref{eq:mud} gives the decode capacity, which determines the deployment serving capacity through~\eqref{eq:twobottleneck}.

\section{Experiments}
\label{sec:exp}
We first evaluate the accuracy of the analytical serving-capacity and power models against measurements
across PD-disaggregated deployments,
using one fixed-length workload and inference request samples from the Mooncake~\cite{mooncake2025} and Azure~\cite{splitwise2024} production traces.
For serving capacity, we test generalization from fixed-length calibration workloads
to variable-length workloads derived from production traces;
for power, we test cross-workload and cross-deployment accuracy.
We then compare the modeled and measured serving capacity--power Pareto fronts
and evaluate whether they select the same minimum-power feasible deployment for a given serving-capacity requirement.
Finally, we demonstrate two provisioning scenarios:
reducing provisioned resources after the request rate falls,
and temporarily reducing power under an acceptable serving-capacity threshold to respond to grid-side load-reduction signals.

\begin{table}[tb]
\centering
\caption{Request counts and input and output lengths for the experimental workloads.
  The fixed-length column lists the lengths used across the fixed-length workloads; the trace columns report sample means.}
\label{tab:workload}
\footnotesize\setlength{\tabcolsep}{4pt}
\begin{tabular}{lrrr}
\toprule
 & Fixed-length & Mooncake & Azure \\
\midrule
\# requests                       & ---            & $1000$ & $7200$ \\
$\mathbb{E}_r[\ell_\mathrm{in}]$  & $1024$--$8192$ & $8307$ & $1158$ \\
$\mathbb{E}_r[\ell_\mathrm{out}]$ & $256$          & $345$  & $211$  \\
\bottomrule
\end{tabular}
\end{table}
\begin{table}[tb]
\centering
\caption{Calibrated model parameters}
\label{tab:calib}
\small
\begin{tabular}{llll}
\toprule
 & Symbol & Prefill & Decode \\
\midrule
\multicolumn{4}{l}{\emph{Serving capacity}}\\
Hardware utilization & MFU, MBU        & $0.67$   & $0.77$ \\
Attention coefficient& $c_a$           & $2.17$   & \\
Per-iteration overhead & $t_\mathrm{iter}$ &      & $1.00$ ms \\
Per-request overhead & $t_\mathrm{req}$ &    & $0.062$ ms \\
\midrule
\multicolumn{4}{l}{\emph{Power ramp} (W)}\\
Static power         & $p_0$           & $133$    & $448$ \\
Slope                & $\gamma$        & $566$    & $458$ \\
Saturated power      & $p_\mathrm{sat}$& $692$    & $678$ \\
\bottomrule
\end{tabular}
\end{table}
\subsection{Experimental Setup}
\label{sub:exp_setup}
All experiments serve Qwen3-32B~\cite{qwen3_2025} on H200 GPUs~\cite{h200_2023}, with one GPU per instance.
We use SGLang~0.5.9~\cite{sglang2024}, an open-source LLM serving framework.
For KV-cache transfers between prefill and decode instances, we use Mooncake~\cite{mooncake2025} as the backend.
Within the prefill and decode pools, requests are assigned to instances in round-robin order, so successive requests are distributed across the instances in turn.

We conduct experiments with three types of workloads.
(i) Fixed-length workloads, in which all requests have the same input and output lengths, using five discrete input lengths between 1024 and 8192 tokens and a fixed output length of 256 tokens.
(ii) Request samples from the Mooncake conversation trace~\cite{mooncake2025}, collected from Moonshot AI's Kimi chat service and containing long-context requests.
(iii) Request samples from the Azure conversation trace~\cite{splitwise2024}, collected from a production LLM inference service on Microsoft Azure.
Both traces record the input and output token counts of individual requests.
Table~\ref{tab:workload} summarizes the requests used in our experiments.
Section~\ref{sub:exp_e2e} evaluates one fixed-length workload and both trace-derived workloads
across deployments with 2 to 8 instances.
Each evaluated deployment operates at its serving capacity.

Table~\ref{tab:calib} lists the calibrated model parameters used in the evaluation.
For all serving-capacity and power comparisons, we exclude startup and shutdown transients from the measurement windows.
Reported total power includes only GPUs assigned to instances currently in service.
Appendix~\ref{app:calibration} provides the model and hardware constants and describes the calibration, measurement, and trace-sampling procedures.

\begin{figure}[!tb]
  \centering
  \includegraphics[width=\columnwidth]{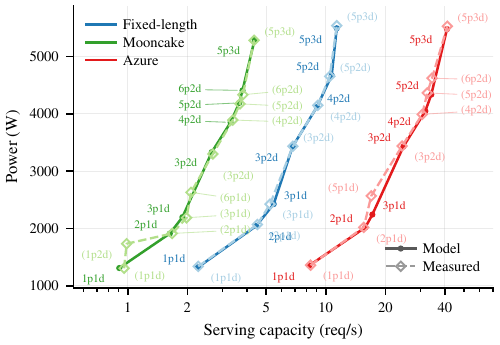}
  \caption{Modeled and measured serving capacity--power Pareto fronts for the three workloads.}
  \label{fig:pareto_traces}
\end{figure}

\subsection{Model Validation}
\label{sub:exp_e2e}

\paragraph{Serving-capacity accuracy.}
We evaluate whether the analytical serving-capacity model, calibrated only on fixed-length workloads,
remains accurate for variable-length workloads derived from production traces without further calibration.
We first evaluate the fixed-length workload with 4096 input and 256 output tokens,
which is included in the calibration workloads.
Figure~\ref{fig:pareto_traces} compares the modeled and measured serving capacities for this workload.
The mean absolute percentage error across all evaluated deployments is $1.2\%$.

We then evaluate the Mooncake and Azure trace-derived workloads,
neither of which is used to calibrate the serving-capacity model.
We keep the calibrated model parameters fixed and update only the workload length moments for each workload.
The mean absolute percentage errors are $3.0\%$ for Mooncake and $1.6\%$ for Azure.

\paragraph{Power accuracy.}
We evaluate whether the same prefill and decode power models remain accurate
across the evaluated workloads and deployments.
Each model takes normalized serving throughput as input, and we keep its calibrated parameters fixed throughout the evaluation.

The comparisons are visualized in Figure~\ref{fig:pareto_traces}. For the fixed-length workload, the mean absolute percentage error in total GPU power across all evaluated deployments is $2.3\%$.
The corresponding errors are $2.7\%$ for Mooncake and $2.6\%$ for Azure.
These results show that the calibrated power models closely match measured deployment power across the evaluated workloads.

\paragraph{Pareto fronts and deployment selection.}
Figure~\ref{fig:pareto_traces} shows that the modeled serving capacity--power Pareto fronts
closely follow the measured fronts.
Along the measured fronts, higher serving capacity comes with higher total GPU power.
At comparable power levels, serving capacity is highest for Azure, followed by the fixed-length workload and Mooncake.
This ordering is consistent with the mean input and output lengths in Table~\ref{tab:workload},
since longer requests require more processing per request.

We then evaluate whether the modeled and measured Pareto fronts lead to the same deployment choice
for a given serving-capacity requirement.
For each evaluated requirement, we select the minimum-power feasible deployment
separately from the modeled and measured fronts.
The selected deployments agree for $90\%$ of the evaluated serving-capacity requirements for the fixed-length workload,
$85\%$ for Mooncake, and $86\%$ for Azure.
All disagreements occur near a feasibility boundary, where the required serving capacity is within $4\%$
of a deployment's measured serving capacity.

\subsection{Power-Aware Provisioning}
\label{sub:exp_track}
We use two scenarios to examine how provisioning changes can improve efficiency
or provide temporary power flexibility.
In the first, a decrease in request rate creates serving-capacity headroom
that allows provisioned resources to be reduced while the deployment remains stable.
In the second, the request rate remains unchanged and provisioned capacity is temporarily reduced below that rate, reducing power at the cost of transient service degradation.
Both scenarios use the fixed-length workload with 4096 input and 256 output tokens
and begin with three prefill instances and two decode instances (3p2d).

To capture the transient effects of reconfiguration, we report TTFT and TPOT throughout each run.
TTFT is measured from request arrival to the first output token and includes prefill-queue waiting time, prefill execution, and KV-cache transfer to the decode instance.
TPOT is measured over individual intervals between consecutive output tokens rather than first averaging within each request, so that its p95 captures transient stalls.

\subsubsection{Efficiency After a Request-Rate Drop}
We examine whether a decrease in request rate creates sufficient capacity headroom to remove a decode instance.
The request rate is initially set to $60\%$ of the serving capacity of 3p2d and is then reduced to $3.1$\,req/s (Figure~\ref{fig:scn_pair}, left).
Before any change in provisioning, the lower request rate alone reduces total power by approximately $270$\,W, while TTFT and TPOT remain nearly unchanged.
This reduction reflects the load dependence of per-instance power: the same provisioned deployment consumes less power at a lower request rate.

At $3.1$\,req/s, 3p1d still provides sufficient serving capacity, so one decode instance can be removed without overloading the resulting deployment.
This reconfiguration reduces total power by an additional $560$\,W.
The resulting 3p1d deployment operates at $60\%$ of its serving capacity.
Median TTFT remains approximately $0.55$\,s, while median TPOT increases from $23$ to $28$\,ms as the remaining decode instance operates at higher load.
Thus, after the request-rate reduction, capacity headroom can be converted into additional power savings without inducing queue growth; the main latency effect is a modest increase in decode-side TPOT.

\begin{figure}[!htb]
  \centering
  \includegraphics[width=\columnwidth]{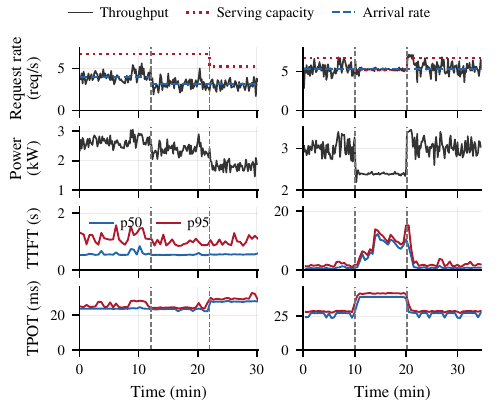}
  \caption{Serving throughput, total GPU power, TTFT, and TPOT during decode-instance reconfiguration.
    Left: a decode instance is removed after the request rate decreases.
    Right: a decode instance is removed and restored at a fixed request rate.}
  \label{fig:scn_pair}
\end{figure}

\subsubsection{Power Flexibility at a Fixed Request Rate}
We examine how much power can be saved by temporarily reducing provisioned capacity below a fixed request rate, and how long the reduction can be sustained.
We hold the request rate at $5.3$\,req/s, or $80\%$ of the serving capacity of 3p2d.
We then remove one decode instance for $10$\,min before restoring it (Figure~\ref{fig:scn_pair}, right).

Removing the decode instance reduces total power by approximately $620$\,W but lowers serving capacity to $5.2$\,req/s, slightly below the offered request rate.
With serving capacity below the request rate, requests accumulate in the queue and median TTFT increases from $0.7$ to $8.5$\,s during the removal interval.
Meanwhile, the remaining decode instance operates near its serving limit with a larger batch, increasing median TPOT from $27$ to $39$\,ms.
The much larger TTFT increase arises because requests must wait in the growing queue once the request rate exceeds serving capacity.

Restoring the second decode instance raises serving capacity above the request rate and allows the accumulated queue to drain within approximately $1.3$\,min.
TTFT, TPOT, and total power then return to their pre-removal levels.
Thus, when removing resources would reduce serving capacity below the request rate, the deployment can still provide temporary power flexibility, but the reduction can be sustained only as long as the resulting queueing delay remains tolerable.
Appendix~\ref{app:shrink} extends this behavior to a sequential scale-down experiment.

\section{Conclusion}
\label{sec:conclusion}

This paper develops an analytical framework for power-aware provisioning of PD-disaggregated AI inference. The framework characterizes how the workload distribution, hardware limits, and numbers of prefill and decode instances jointly determine serving capacity and average deployment power. In particular, the serving-capacity model captures the coupling between the two serving phases induced by KV-cache reservations, as well as the impact of request queueing, while the power model relates per-instance power consumption to normalized serving throughput. Combining these models yields a discrete serving capacity--power Pareto front over candidate provisioned deployments, providing a direct way to identify minimum-power deployments for a required serving capacity and to quantify the serving capacity available under different power limits.

Experiments across fixed-length workloads and workloads derived from production traces show that the analytical models closely reproduce measured serving capacity and power while reusing the same calibrated parameters across workloads. The resulting modeled Pareto fronts also lead to provisioning decisions that largely agree with those obtained from measurements. These results show that analytical characterization of the serving capacity--power relationship can reduce reliance on repeated profiling and simulation while providing a principled basis for efficient and grid-responsive operation of large-scale AI inference systems.

\bibliographystyle{ACM-Reference-Format}
\bibliography{references}

\appendix
\section{Additional Derivations and Proofs}
\label{app:proofs}
\subsection{Proof of Lemma~\ref{lem:ctx}}
\label{app:proof_ctx}
\begin{proof}
Let $A_K$ denote the number of requests admitted to a given decode instance during the first $K$ decode iterations.
Reindexing the context contributions from iterations to requests gives
\begin{equation}
\begin{aligned}
\bar{\ell}_{\mathrm{ctx}}
  &:= \frac{\mathbb{E}_k\!\left[\textstyle\sum_{r\in\mathcal{D}_k}\ell_{\mathrm{ctx},r,k}\right]}
          {\mathbb{E}_k[B_k]} \\
  &= \lim_{K\to\infty}
     \frac{\textstyle\sum_{k=1}^{K}\sum_{r\in\mathcal{D}_k}\ell_{\mathrm{ctx},r,k}}
          {\textstyle\sum_{k=1}^{K}B_k}\\
  &\stackrel{\circled{1}}{=}\!\! \lim_{K\to\infty}
     \frac{\textstyle\sum_{r=1}^{A_K}\sum_{j=1}^{\ell_{\mathrm{out},r}-1}(\ell_{\mathrm{in},r}+j)}
          {\textstyle\sum_{r=1}^{A_K}(\ell_{\mathrm{out},r}-1)} \\
  &\stackrel{\circled{2}}{=} \frac{\mathbb{E}_r\!\left[\textstyle\sum_{j=1}^{\ell_\mathrm{out}-1}(\ell_\mathrm{in}+j)\right]}
          {\mathbb{E}_r[\ell_\mathrm{out}-1]} \\
  &\stackrel{\circled{3}}{=} \frac{\mathbb{E}_r\!\left[(\ell_\mathrm{out}-1)\left(\ell_\mathrm{in}+\frac{\ell_\mathrm{out}}{2}\right)\right]}
          {\mathbb{E}_r[\ell_\mathrm{out}-1]} \\
  &= \frac{\mathbb{E}_r[\ell_\mathrm{out}-1]\,
           \mathbb{E}_r\!\left[\ell_\mathrm{in}+\frac{\ell_\mathrm{out}}{2}\right]
           + \mathrm{Cov}\!\left(\ell_\mathrm{out}-1,\,\ell_\mathrm{in}+\frac{\ell_\mathrm{out}}{2}\right)}
          {\mathbb{E}_r[\ell_\mathrm{out}-1]} \\
  &\stackrel{\circled{4}}{=} \underbrace{\mathbb{E}_r[\ell_\mathrm{in}]+\frac{\mathbb{E}_r[\ell_\mathrm{out}]}{2}}_{\text{arrival mean}}
   + \underbrace{\frac{\frac{1}{2}\mathrm{Var}(\ell_\mathrm{out})+\mathrm{Cov}(\ell_\mathrm{in},\ell_\mathrm{out})}
                      {\mathbb{E}_r[\ell_\mathrm{out}-1]}}_{\text{residence correction}} .
\end{aligned}
\label{eq:ctx_proof}
\end{equation}
Equation $\circled{1}$ counts the same context contributions request by request rather than iteration by iteration.
A request contributes $\ell_\mathrm{out}-1$ terms, one per decode iteration, and its context length increases from $\ell_\mathrm{in}+1$ to $\ell_\mathrm{in}+\ell_\mathrm{out}-1$ over them.
In $\circled{2}$, boundary requests contribute a vanishing fraction of both sums as $K\to\infty$, so the request averages converge to expectations under the workload distribution~\cite{gallager1996discrete}.
Equation $\circled{3}$ sums this arithmetic sequence, yielding $(\ell_\mathrm{out}-1)(\ell_\mathrm{in}+\ell_\mathrm{out}/2)$.
Finally, expanding the covariance in $\circled{4}$ gives~\eqref{eq:ctx_longrun}, which completes the proof.
\end{proof}

\subsection{Derivation of the Decode Serving Capacity}
\label{app:proof_mud}
\begin{proof}
Let $A_K$ denote the number of requests admitted to a given decode instance during the first $K$ decode iterations.
Each request requires $\ell_\mathrm{out}-1$ decode tokens, while each decode iteration generates one token for every active request.
Over a long serving window, the mean decode time per admitted request is therefore
\begin{equation}
\begin{aligned}
\frac{1}{\mu_D}
&\stackrel{\circled{1}}{=}
\lim_{K\to\infty}
\frac{\displaystyle\sum_{k=1}^{K}t_{D,k}}{A_K}
\\[6pt]
&=
\lim_{K\to\infty}
\frac{1}{A_K}
\frac{
\displaystyle\sum_{k=1}^{K}
\left(
2N+
\kappa\sum_{r\in\mathcal D_k}
\ell_{\mathrm{ctx},r,k}
\right)
}{
\beta\,\mathrm{MBU}
}
\\[6pt]
&=
\frac{1}{\beta\,\mathrm{MBU}}
\lim_{K\to\infty}
\frac{
2NK+
\kappa\displaystyle\sum_{k=1}^{K}
\sum_{r\in\mathcal D_k}
\ell_{\mathrm{ctx},r,k}
}{
A_K
}
\\[6pt]
&\stackrel{\circled{2}}{=}
\frac{2N}{\beta\,\mathrm{MBU}}
\left(
\lim_{K\to\infty}
\frac{\sum_{k=1}^{K}B_k}{A_K}
\right)
\bigg/
\left(
\lim_{K\to\infty}
\frac{\sum_{k=1}^{K}B_k}{K}
\right)
\\[4pt]
&\quad+
\frac{\kappa}{\beta\,\mathrm{MBU}}
\lim_{K\to\infty}
\frac{
\displaystyle\sum_{r=1}^{A_K}
\sum_{j=1}^{\ell_{\mathrm{out},r}-1}
(\ell_{\mathrm{in},r}+j)
}{
A_K
}
\\[6pt]
&\stackrel{\circled{3}}{=}
\frac{1}{\beta\,\mathrm{MBU}}
\left[
\frac{
2N\,\mathbb E_r[\ell_{\mathrm{out}}-1]
}{B}
+
\kappa\,
\mathbb E_r\!\left[
\sum_{j=1}^{\ell_{\mathrm{out}}-1}
(\ell_{\mathrm{in}}+j)
\right]
\right]
\\[6pt]
&\stackrel{\circled{4}}{=}
\frac{\mathbb E_r[\ell_{\mathrm{out}}-1]}{B}
\left(
\frac{2N}{\beta\,\mathrm{MBU}}
+
\frac{
\kappa B\bar{\ell}_{\mathrm{ctx}}
}{
\beta\,\mathrm{MBU}
}
\right)
\\[6pt]
&\stackrel{\circled{5}}{=}
\frac{
\mathbb E_r[\ell_{\mathrm{out}}-1]\,
\bar t_D
}{B}.
\end{aligned}
\label{eq:mud_proof}
\end{equation}
Equation $\circled{1}$ is the mean decode time per admitted request over the observation window, in which boundary requests contribute a vanishing fraction as $K\to\infty$.
Equations $\circled{2}$ and $\circled{3}$ relate iteration-level and request-level counts of decode work.
The total number of active request--iteration pairs, $\sum_k B_k$, is also the total number of decode tokens generated.
Hence $\sum_k B_k/A_K$ converges to $\mathbb{E}_r[\ell_\mathrm{out}-1]$, while $\sum_k B_k/K$ converges to the operating batch $B$.
The KV-cache contribution is reindexed by request as in~\eqref{eq:ctx_proof}.
Substituting the definitions of $\bar{\ell}_{\mathrm{ctx}}$ and $\bar{t}_D$ in $\circled{4}$ and $\circled{5}$ then yields the capacity relation in~\eqref{eq:mud}.
This relation also holds with calibrated overheads, since the token count is unchanged and the overheads are included in the mean iteration time $\bar t_D$.
\end{proof}

\subsection{Proof of Lemma~\ref{lem:prefill}}
\label{app:proof_prefill}
\begin{proof}
To compute the mean occupancy, consider continuous time $t$, and let $A(T)$ denote the number of requests admitted to a given decode instance during $[0,T]$.
Reindexing the accumulated space--time occupancy over this interval from time to requests gives
\begin{equation}
\begin{aligned}
&\mathbb{E}_t\!\left[
  \sum_{r\in\mathcal{P}(t)}(\ell_{\mathrm{in},r}+R)
\right] \\
&=
\lim_{T\to\infty}\frac{1}{T}\int_0^T
  \sum_{r\in\mathcal{P}(t)}(\ell_{\mathrm{in},r}+R)\,dt \\
&\stackrel{\circled{1}}{=}
\lim_{T\to\infty}\frac{1}{T}
  \sum_{r=1}^{A(T)}(t_{W,r}+t_{P,r})(\ell_{\mathrm{in},r}+R) \\
&=
\lim_{T\to\infty}\frac{A(T)}{T}
\left[
  \frac{1}{A(T)}\sum_{r=1}^{A(T)}
  (t_{W,r}+t_{P,r})(\ell_{\mathrm{in},r}+R)
\right] \\
&\stackrel{\circled{2}}{=}
\mu_D\,\mathbb{E}_r\!\left[(t_W+t_P)(\ell_\mathrm{in}+R)\right].
\end{aligned}
\label{eq:prefill_ct_proof}
\end{equation}
Equation $\circled{1}$ counts the same space--time occupancy request by request.
Request $r$ holds $\ell_{\mathrm{in},r}+R$ token slots for its prefill holding time $t_{W,r}+t_{P,r}$, its waiting time plus its prefill service time.
Requests whose reservations overlap either end of the observation window contribute only boundary terms, which vanish as $T\to\infty$.
In $\circled{2}$, the admission rate $A(T)/T$ converges to the decode capacity $\mu_D$ under steady operation at the serving limit, while the request average converges to $\mathbb{E}_r[(t_W+t_P)(\ell_\mathrm{in}+R)]$.
This establishes~\eqref{eq:prefill_ct}.
We next expand the expectation on its right-hand side:
\begin{equation}
\begin{aligned}
& \mathbb{E}_r\!\left[(t_W+t_P)(\ell_\mathrm{in}+R)\right] \\
={}& \mathbb{E}_r\!\left[t_W(\ell_\mathrm{in}+R)\right]
  + \mathbb{E}_r\!\left[t_P(\ell_\mathrm{in}+R)\right] \\
\stackrel{\circled{1}}{=}{}& \bar{t}_W\,\mathbb{E}_r[\ell_\mathrm{in}+R]
  + \mathbb{E}_r\!\left[t_P(\ell_\mathrm{in}+R)\right] \\
={}& \bar{t}_W\,\mathbb{E}_r[\ell_\mathrm{in}+R]
  + \mathbb{E}_r\!\left[
   (a_P\ell_\mathrm{in}+b_P\ell_\mathrm{in}^2)(\ell_\mathrm{in}+R)
   \right] \\[2pt]
={}& \bar{t}_W\big(\mathbb{E}_r[\ell_\mathrm{in}]+R\big) \\
& \quad + a_P\big(
  \mathbb{E}_r[\ell_\mathrm{in}^2]+R\,\mathbb{E}_r[\ell_\mathrm{in}]
  \big) \\
& \quad + b_P\big(
  \mathbb{E}_r[\ell_\mathrm{in}^3]+R\,\mathbb{E}_r[\ell_\mathrm{in}^2]
  \big).
\end{aligned}
\label{eq:prefill_terms_proof}
\end{equation}
Under FCFS, an arriving request's waiting time is determined by the work already in the queue.
With request lengths independent across arrivals, it is therefore independent of that request's own input length.
Equation $\circled{1}$ uses this independence to factor $\mathbb{E}_r[t_W(\ell_\mathrm{in}+R)]$ as $\bar t_W\,\mathbb{E}_r[\ell_\mathrm{in}+R]$, with $\bar t_W=\mathbb{E}_r[t_W]$.
The resulting expression is~\eqref{eq:prefill_terms}, which completes the proof.
\end{proof}

\subsection{Derivation of the Unused-Capacity Approximation}
\label{app:unused}
Let $A_K$ denote the number of requests admitted to a given decode instance during the first $K$ decode iterations.
Under the two steps of the approximation, the mean unused space is
\begin{equation}
\begin{aligned}
\mathbb{E}_k[U_k]
&\stackrel{\circled{1}}{\approx}
\lim_{K\to\infty}
\frac{\displaystyle\sum_{r=1}^{A_K}\int_0^{\ell_{\mathrm{in},r}+R}u\,du}
     {\displaystyle\sum_{r=1}^{A_K}(\ell_{\mathrm{in},r}+R)} \\
&=
\lim_{K\to\infty}
\frac{\displaystyle\sum_{r=1}^{A_K}(\ell_{\mathrm{in},r}+R)^2/2}
     {\displaystyle\sum_{r=1}^{A_K}(\ell_{\mathrm{in},r}+R)} \\
&\stackrel{\circled{2}}{=}
\frac{\mathbb{E}_r[(\ell_{\mathrm{in}}+R)^2]}
     {2\,\mathbb{E}_r[\ell_{\mathrm{in}}+R]}.
\end{aligned}
\label{eq:unused_proof}
\end{equation}
Equation $\circled{1}$ applies the two steps: the size weighting selects the reservation that does not fit, and the uniform position leaves on average half of it unused, so the mean unused space is the size-weighted mean of $s/2$.
Equation $\circled{2}$ divides numerator and denominator by $A_K$ and takes $K\to\infty$, which yields the ratio of workload expectations.
The result is~\eqref{eq:unused}.

\subsection{Proof of Theorem~\ref{thm:unique_b}}
\label{app:proof_unique_b}
\begin{proof}
On the stable range, increasing $B$ increases the decode capacity and therefore the prefill utilization.
The waiting time in~\eqref{eq:kingman} increases with utilization, so both the prefill occupancy and the decode occupancy increase with $B$.
The unused-capacity approximation in~\eqref{eq:unused} is independent of $B$.
Consequently, $g(B)$ is continuous and strictly increasing on this range.
Explicitly,
\[
\begin{aligned}
g'(B)={}&\mu_D'(B)\big(\bar t_W\,\mathbb{E}_r[\ell_\mathrm{in}+R]+\mathbb{E}_r[t_P(\ell_\mathrm{in}+R)]\big)\\[2pt]
&+\mu_D\,\bar t_W'(B)\,\mathbb{E}_r[\ell_\mathrm{in}+R]
+\bar\ell_{\mathrm{ctx}}+R>0 .
\end{aligned}
\]

At $B=0$, both prefill and decode occupancies vanish, giving $g(0)=\mathbb{E}_k[U_k]-C_\mathrm{tok}<0$.
To find a positive upper value, we compare the full-pool batch with the queue-stability limit.
The full-pool batch $B^{\max}$ of~\eqref{eq:bmax} is positive by assumption.
Let $B^\rho=\sup\{B>0:\rho_P(B)<1\}$ denote the queue-stability limit, the upper end of the stable range.

If $B^{\max}<B^\rho$, the prefill queue remains stable
at $B^{\max}$.
Substituting~\eqref{eq:bmax} into the residual gives
\[
g(B^{\max})
=
\mu_D(B^{\max})
\left(
\bar t_W(B^{\max})\,\mathbb{E}_r[\ell_{\mathrm{in}}+R]
+\mathbb{E}_r[t_P(\ell_{\mathrm{in}}+R)]
\right)>0.
\]
If $B^\rho\le B^{\max}$, then $\rho_P(B)\to1$
as $B\to(B^\rho)^-$.
By~\eqref{eq:kingman}, $\bar t_W(B)\to+\infty$,
while the definition of $\rho_P$ gives
$\mu_D(B)\to n_P\mu_P/n_D>0$.
Since $\mathbb{E}_r[\ell_{\mathrm{in}}+R]>0$
and the remaining terms have finite limits,
\[
\lim_{B\to(B^\rho)^-}g(B)
=
\lim_{B\to(B^\rho)^-}
\mu_D(B)\bar t_W(B)\,
\mathbb{E}_r[\ell_{\mathrm{in}}+R]
=+\infty.
\]

In either case, continuity and $g(0)<0$ guarantee a root
in $\big(0,\min(B^{\max},B^\rho)\big)$.
Since $g$ is strictly increasing on the stable range,
this root is unique.
\end{proof}

\section{Model Calibration}
\label{app:calibration}
The analytical models combine quantities obtained from the workload, model architecture, and hardware specifications with parameters estimated from measurements.
We calibrate MFU and $c_a$ for prefill; MBU, $t_\mathrm{iter}$, and $t_\mathrm{req}$ for decode; and the power-ramp parameters $p_0$, $\gamma$, and $p_\mathrm{sat}$ for each role.
The fitted parameters are reused across provisioned deployments and workloads under the same model, hardware, and serving configuration.
We first summarize the model and hardware constants, then describe the calibration method and report the calibration runs and fitted results.
The parameter values used in the evaluation are listed in Table~\ref{tab:calib}.

\subsection{Model and Hardware Constants}
\label{sub:constants}
The H200 GPU~\cite{h200_2023} has a peak compute throughput of $\pi=989$ TFLOP/s and a memory bandwidth of $\beta=4.8$ TB/s.
Qwen3-32B~\cite{qwen3_2025} has $N=32.8$B parameters and $L=64$ layers, with $h_q=64$ query heads and $h_\mathrm{kv}=8$ KV heads of width $d_\mathrm{head}=128$ in grouped-query attention~\cite{gqa2023}.
Two quantities follow from the architecture: the attention width $d=h_q d_\mathrm{head}=8192$, and the KV-cache bytes per context token $\kappa=4Lh_\mathrm{kv}d_\mathrm{head}$, since for each token the KV cache stores one key and one value per KV head in every layer, using two bytes per value in bf16.

\subsection{Calibration Method}
\label{sub:calibration}

\paragraph{Prefill.}
Prefill calibration estimates MFU and $c_a$ from the dependence of the prefill service time on input length.
Because~\eqref{eq:tp} contains both a linear and a quadratic term, the fit requires $t_P$ at several input lengths.
For each input length, we saturate a single prefill instance while provisioning enough decode capacity to keep prefill as the bottleneck.
The measured completion rate is then $\mu_P$, and because each calibration workload uses a fixed input length,~\eqref{eq:mup} reduces to $t_P=1/\mu_P$.
We fit $a_P\ell_\mathrm{in}+b_P\ell_\mathrm{in}^2$ to the measured $1/\mu_P$, with no intercept as required by~\eqref{eq:tp}.
The fitted coefficients give
\[
\mathrm{MFU}=\frac{2N}{\pi a_P},
\qquad
c_a=\frac{\pi\,\mathrm{MFU}\,b_P}{Ld} .
\]

\paragraph{Decode.}
We calibrate the decode iteration time rather than decode capacity, because it also depends on the operating batch determined by the KV-cache memory balance of Section~\ref{sub:operating_batch}.
At a fixed context length, the memory-traffic model~\eqref{eq:td_mean} determines an iteration time that is linear in the batch, with intercept $a_D$ and slope $b_D$.
Measurements approximately follow this dependence, but also indicate an iteration-wide overhead and a context-independent cost per active request.
We account for these costs by adding $t_\mathrm{iter}$ to the intercept and $t_\mathrm{req}$ to the slope:
\begin{equation}
a_D = \frac{2N}{\beta\,\mathrm{MBU}} + t_\mathrm{iter},
\qquad
b_D = \frac{\kappa\bar{\ell}_{\mathrm{ctx}}}{\beta\,\mathrm{MBU}} + t_\mathrm{req} .
\label{eq:etd_cal}
\end{equation}
These overheads may include fixed iteration costs, such as CPU--GPU dispatch and kernel launches, and context-independent per-request costs, such as metadata processing.
Both are empirical corrections and need not correspond to any single mechanism.
Algorithm~\ref{alg:capacity} uses these calibrated coefficients to compute the operating batch and decode capacity.

To estimate these parameters, we first fit the dependence on batch size at each context-length setting, then fit the resulting slopes as a function of context length.
Let $j=1,\ldots,J$ index the context-length settings, and let $\bar\ell_{\mathrm{ctx},j}$ denote the context length of setting $j$.
The calibration workloads use fixed input and output lengths, so $\mathrm{Var}(\ell_\mathrm{out})=\mathrm{Cov}(\ell_\mathrm{in},\ell_\mathrm{out})=0$;
the residence correction in~\eqref{eq:ctx_longrun} therefore vanishes and $\bar\ell_{\mathrm{ctx},j}=\ell_{\mathrm{in},j}+\ell_\mathrm{out}/2$.
At each setting $j$, we fit $\bar t_{D,j}=a_{D,j}+b_{D,j}B$ to the measured mean iteration times, where $a_{D,j}$ and $b_{D,j}$ are the fitted intercept and slope.
Because the fitted intercepts vary only modestly across context lengths, we use their mean as the context-independent intercept assumed by the model.
We then regress $b_{D,j}$ on $\bar\ell_{\mathrm{ctx},j}$ as $b_{D,j}=c_{D,1}\bar\ell_{\mathrm{ctx},j}+c_{D,0}$, where $c_{D,0}$ and $c_{D,1}$ are the fitted intercept and slope.
The three decode parameters are then
\[
\begin{aligned}
\mathrm{MBU}
&=\frac{\kappa}{\beta c_{D,1}},\\
t_{\mathrm{req}}
&=c_{D,0},\\
t_{\mathrm{iter}}
&=\frac{1}{J}\sum_{j=1}^{J}a_{D,j}
-\frac{2N}{\beta\,\mathrm{MBU}} .
\end{aligned}
\]
These values are substituted into~\eqref{eq:etd_cal} in all reported results.

\paragraph{Power.}
We calibrate the power model by fitting the capped ramp of~\eqref{eq:ramp} to measured per-instance power at different values of normalized serving throughput.
The measurements must cover both the region where power rises with the serving throughput and the region where it has saturated.
Because the throughput at which power saturates is not known in advance, we fit $p_0$, $\gamma$, and $p_\mathrm{sat}$ jointly, using measurements from all calibration workloads to obtain one ramp for each role.
The shared ramp per role assumes that the normalized serving throughput accounts for the workload dependence of average power.

\subsection{Calibration Runs and Results}
\label{sub:calibration_results}
We use fixed-length workloads to calibrate the serving-capacity model.
Every request has 256 output tokens, while the input length takes one of five values: 1024, 1536, 2048, 4096, and 8192 tokens.
Prefill calibration uses the three longer input lengths, and decode calibration uses all five.
Power calibration additionally includes measurements from the request-rate sweep and from runs using workloads derived from the production traces.

\paragraph{Prefill.}
At each of the three longer input lengths, we run 5 provisioned deployments, with 1 prefill instance and 1 to 5 decode instances, at an arrival rate above their serving capacity, 15 runs in all.
We obtain the completion rate by dividing the decode token-generation rate over the measurement window by $\ell_\mathrm{out}-1$,
the number of tokens a request receives in decode, since prefill produces its first output token.
The completion rate varies by under $2\%$ across each ladder, so prefill is the bottleneck in every provisioned deployment, and its mean over the 5 provisioned deployments is $\mu_P$ at that input length.
The fit to the three values of $1/\mu_P$ gives the MFU and $c_a$ of Table~\ref{tab:calib}.
The fitted $\mathrm{MFU}=0.67$ corresponds to an effective compute rate of $67\%$ of GPU peak.
Ideal triangular causal attention evaluates $\ell_\mathrm{in}(\ell_\mathrm{in}+1)/2$ query--key pairs, so its quadratic coefficient approaches $2$ from above.
The fitted $c_a=2.17$ is modestly higher.
The diagonal explains less than one percent of this gap over the calibration range;
the remainder may reflect masked work in boundary tiles, softmax, and differences between the effective utilization of attention and linear layers.
We therefore read $c_a$ as an effective attention coefficient rather than an exact count of causal query--key pairs.

\paragraph{Decode.}
At each of the five input lengths, every decode instance of every provisioned deployment in the serving-capacity sweeps provides one measurement:
its logs record the number of active requests at fixed iteration intervals, so their mean over the measurement window is the batch $B$,
and $B$ divided by the rate at which the instance generates tokens over the same window is its mean iteration time.
Provisioned deployments of different shapes cover the range of batches:
when decode limits a provisioned deployment, its instances run at the batch that the KV-cache pool allows, which is nearly the same for every such provisioned deployment at a fixed input length;
when prefill limits it, its decode instances receive fewer requests and run at smaller batches.
These instances may idle between iterations, and weighting each measurement by its generation rate reduces the influence of such idle-contaminated estimates.
The three longer lengths draw on 24 provisioned deployments spanning every prefill and decode count within the 8-GPU budget, giving 60 decode instances each;
the two shorter lengths draw on 9 provisioned deployments each, giving 22, so the first regression fits 60 or 22 measurements per length and the second fits the 5 slopes.
The five slopes and intercepts give the MBU, $t_\mathrm{req}$ and $t_\mathrm{iter}$ of Table~\ref{tab:calib}.
The measured intercepts rise slightly from the shortest to the longest context length, which~\eqref{eq:etd_cal} does not represent; the mean absorbs this variation.
The fitted $\mathrm{MBU}=0.77$ corresponds to $77\%$ of peak memory bandwidth, a substantial fraction of the available bandwidth.
The two overheads matter at different scales: $t_\mathrm{iter}$ adds only a few percent to the weight-read time, whereas $t_\mathrm{req}$ accounts for a substantial part of the per-request cost at short contexts but becomes less important as the context grows.

\paragraph{Power.}
Each measurement is one GPU in one run: its power is the mean over the measurement window,
and its normalized serving throughput is the rate its instance served in that window divided by the serving capacity the model assigns to that role for that workload.
The serving-capacity sweeps and runs using the trace-derived workloads provide measurements at serving capacity, with the non-bottleneck role of each provisioned deployment below it,
while the request-rate sweep at $\ell_\mathrm{in}=4096$, which runs 7 provisioned deployments at 5 fractions of their measured serving capacity, is the only data with both roles below serving capacity at the same time.
We fit the three ramp parameters jointly by bounded nonlinear least squares.
The intercept is bounded below by the power of a GPU that holds the model weights but serves no requests, $115$\,W on this hardware, because a ramp evaluated at zero rate describes such an instance rather than an idle machine.
We use multiple starting points because some initializations place every measurement on the cap and leave $p_0$ and $\gamma$ unconstrained; we retain the fit with the lowest error.
The prefill ramp is fitted on $905$ per-GPU measurements and saturates at a normalized serving throughput of $0.99$, with a root-mean-square error of $19$\,W;
the decode ramp is fitted on $781$ and saturates at $0.50$, with $49$\,W.
The decode ramp therefore runs at its cap over the upper half of its rate range, which is why a serving-capacity error reaches the modeled power only in part.
The measurements of the fixed-length workloads and the Azure trace-derived workload are placed against the two fitted ramps in Figure~\ref{fig:ramp};
the fitted ramps approximately describe the power measurements across the calibration workloads.
In the case studies of Section~\ref{sub:exp_track}, GPUs belonging to removed instances remain idle at approximately $120$\,W each; their power is excluded from the reported total.

\subsection{Trace Request Samples}
\label{sub:trace_samples}
The trace-derived workloads are request samples drawn from the Mooncake and Azure conversation traces.
Requests whose input length exceeds a threshold are excluded first, and the experimental requests are then resampled with replacement from the remaining ones.
For Mooncake, the threshold is 38,000 input tokens, which leaves room for the output within the configured context window of 40,960 tokens; it excludes 647 of the 12,031 trace requests (5.4\%), and 1,000 requests are resampled from the rest.
For Azure, the threshold of 32,768 input tokens excludes none of the 19,366 trace requests, and 7,200 requests are resampled.
For Azure, input and output lengths are resampled as pairs, which preserves their correlation; the Mooncake runs drew the two lengths independently, so the Mooncake samples have no input--output correlation.

\section{Algorithm for Computing Serving Capacity}
\label{app:algorithm}
Algorithm~\ref{alg:capacity} summarizes the procedure for computing the serving capacity of a provisioned deployment under a given workload.
Its inputs are the workload length distribution, the provisioned deployment, the model and hardware constants of Appendix~\ref{sub:constants}, and the calibrated parameters of Appendix~\ref{sub:calibration}.
The prefill capacity follows from the compute roofline alone, whereas the decode side has to be solved jointly: the operating batch $B$ determines the decode capacity and the prefill waiting time, and the reservation occupancy they imply in turn constrains $B$.
The algorithm therefore solves the modeled memory balance of Section~\ref{sub:der_opbatch} for $B$, and the remaining quantities follow from it.
The full-pool decode capacity $\mu_D^{\max}$ is obtained by substituting the full-pool batch $B^{\max}$ of~\eqref{eq:bmax} into the decode-capacity relation~\eqref{eq:mud}; it provides the reference capacity by which the power model normalizes decode throughput in~\eqref{eq:lamn}.
In our implementation, the root of the memory balance is found with SciPy's \texttt{brentq}, an implementation of Brent's bracketed root-finding method, on the bracket $\big(0,\min(B^{\max},B^\rho)\big)$ of Section~\ref{sub:der_opbatch}, with the upper endpoint chosen inside this interval where the residual $g(B)$ is finite and positive, at the default tolerance.

The deployment-selection comparison of Section~\ref{sub:exp_e2e} applies the provisioning problem~\eqref{eq:slo} twice for each required serving capacity $\lambda_{\min}$: once with the modeled $\mu(n_P,n_D;\mathcal{W})$ and $\mathcal{P}(n_P,n_D;\mathcal{W})$, which gives the model's choice, and once with their measured counterparts $\mu^{\mathrm{meas}}$ and $\mathcal{P}^{\mathrm{meas}}$, which gives the measurement-based choice; each choice is the first deployment at or beyond $\lambda_{\min}$ along the corresponding front of Figure~\ref{fig:pareto_traces}.
The fronts of Figures~\ref{fig:pareto} and~\ref{fig:pareto_traces} are $\epsilon$-Pareto fronts~\cite{laumanns2002} with $\epsilon=3\%$ on both serving capacity and power, computed separately on the modeled and on the measured deployments; only deployments on a front are drawn, and the selection uses every evaluated deployment.
The front of Figure~\ref{fig:pareto} is computed for the fixed-length workload with 4096 input and 256 output tokens.
The model's choice is evaluated with its measured serving capacity and power.
The requirement $\lambda_{\min}$ takes 400 values spaced logarithmically between the smallest and largest measured serving capacity of the workload; the agreement is the share of these values at which the two choices coincide, and the margin of a disagreement is the relative distance from $\lambda_{\min}$ to the nearest measured serving capacity of any deployment, of which Section~\ref{sub:exp_e2e} reports the maximum over all disagreements.

\begin{algorithm}[!h]
\SetAlgoLined
\caption{$\textsc{ComputeServingCapacity}$}
\label{alg:capacity}
\SetKwFor{ForEach}{for each}{do}{end for}
\KwInput{length distribution $\mathcal{L}$; provisioned deployment $(n_P,n_D)$;
model, hardware, and serving constants $N$, $L$, $d$, $\kappa$, $\pi$, $\beta$, $C_\mathrm{tok}$, and $R$;
and calibrated MFU, $c_a$, MBU, $t_\mathrm{iter}$, and $t_\mathrm{req}$}
\KwReturn{prefill capacity $\mu_P$, full-pool decode capacity $\mu_D^{\max}$, and deployment serving capacity $\mu$}
\BlankLine
\tcc{Prefill capacity from the compute roofline}
Compute $a_P$, $b_P$, and $\mu_P$ from~\eqref{eq:tp} and~\eqref{eq:mup} \\
\BlankLine
\tcc{Decode coefficients from the memory roofline}
Compute $\bar{\ell}_{\mathrm{ctx}}$, $a_D$, and $b_D$ from~\eqref{eq:theta} and~\eqref{eq:td_mean}, with the calibrated overheads of~\eqref{eq:etd_cal} \\
\BlankLine
\tcc{Operating batch from the memory balance}
Set $CV_a^2 \leftarrow 1/n_P$ and $CV_s^2 \leftarrow \operatorname{Var}_r(t_P)/\mathbb{E}_r[t_P]^2$ \\
\ForEach{candidate $B$ selected by the bracketed root solver}{
Compute $\bar{t}_D$ and $\mu_D$ from~\eqref{eq:td_mean} and~\eqref{eq:mud} \\
Compute $\rho_P \leftarrow n_D\mu_D/(n_P\mu_P)$ and $\bar{t}_W$ from~\eqref{eq:kingman} \\
Evaluate the prefill occupancy, decode occupancy, and unused capacity of~\eqref{eq:operating_batch}
}
Set $B$ to the root at which the three occupancies sum to $C_\mathrm{tok}$,
and keep the corresponding $\bar{t}_D$ and $\mu_D$ \\
\BlankLine
\tcc{Capacities of the provisioned deployment}
Compute $\mu \leftarrow \min(n_P\mu_P,\; n_D\mu_D)$ by~\eqref{eq:twobottleneck} \\
Compute the full-pool decode capacity, with $B^{\max}$ from~\eqref{eq:bmax}
\begin{equation*}
B^{\max} \leftarrow \frac{C_\mathrm{tok}-\mathbb{E}_k[U_k]}{\bar{\ell}_{\mathrm{ctx}}+R},
\quad
\mu_D^{\max} \leftarrow \frac{B^{\max}}{\mathbb{E}_r[\ell_\mathrm{out}-1]\,(a_D + b_D B^{\max})}
\end{equation*} \\
\Return $(\mu_P, \mu_D^{\max}, \mu)$
\end{algorithm}

\section{Sequential Scale-Down Experiment}
\label{app:shrink}
We extend the two provisioning scenarios of Section~\ref{sub:exp_track} with a sequential scale-down experiment at a fixed request rate.
The experiment illustrates the boundary between efficient scale-down and overload: reducing provisioned capacity can lower power while the resulting deployment retains sufficient serving capacity, but once serving capacity falls below the request rate, further power reduction causes sustained queueing and rapidly increasing latency.
The deployment follows $3\mathrm{p}2\mathrm{d}\to3\mathrm{p}1\mathrm{d}\to2\mathrm{p}1\mathrm{d}\to1\mathrm{p}1\mathrm{d}$ at $3.6$\,req/s, or $80\%$ of the serving capacity of 2p1d (Figure~\ref{fig:scn_shrink}).
The first two resulting deployments remain on the Pareto front and retain sufficient serving capacity for this request rate.
Removing a decode instance reduces total power by approximately $520$\,W and increases median TPOT from $24$ to $29$\,ms, while median TTFT remains unchanged.
Removing a prefill instance next reduces total power by approximately $180$\,W, less than the decode removal because the two remaining prefill instances now operate at $80\%$ of their serving capacity.
Median TTFT increases from $0.5$ to $0.9$\,s, with individual-request TTFT reaching up to $6$\,s.

After the third removal, 1p1d has a serving capacity of $2.3$\,req/s, below the fixed request rate of $3.6$\,req/s.
Serving throughput falls to the serving capacity, while requests accumulate in the queue because the request rate exceeds it.
TTFT increases throughout the overloaded period, reaching a maximum of $320$\,s by the end of the run, approximately $14$\,min later.
The $470$\,W power reduction therefore comes at the cost of a persistent throughput shortfall and a growing queue, so 1p1d cannot sustain the offered request rate.
Thus, the final removal marks a qualitative change in operating regime: the earlier power savings preserve a stable deployment, whereas the final reduction is obtained only by leaving serving capacity below the request rate.

\begin{figure}[H]
  \centering
  \includegraphics[width=\columnwidth]{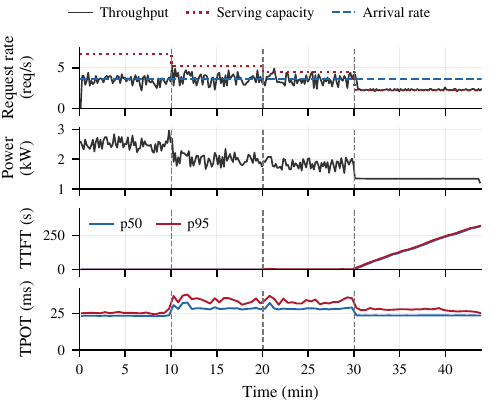}
  \caption{Three removals chained under one request rate
    ($3\mathrm{p}2\mathrm{d}\to3\mathrm{p}1\mathrm{d}\to2\mathrm{p}1\mathrm{d}\to1\mathrm{p}1\mathrm{d}$),
    the same rows as Figure~\ref{fig:scn_pair}.}
  \label{fig:scn_shrink}
\end{figure}

\end{document}